\theoremstyle{plain}
\newtheorem{thm}{Theorem}[section]
\newtheorem{lem}[thm]{Lemma}
\theoremstyle{definition}
\theoremstyle{remark} 
\newtheorem{rmkx}[thm]{Remark}
\newenvironment{rmk}
  {%
   \pushQED{\qed}\begin{rmkx}}
  {\popQED\end{rmkx}}
\newcommand{\interval}[4]{\mathopen{#1}#2
	\mathclose{}\mathpunct{},#3
	\mathclose{#4}}
\newcommand{\intoc}[2]{\interval{(}{#1}{#2}{]}}
\newcommand{\intco}[2]{\interval{[}{#1}{#2}{)}}
\newcommand{\intoo}[2]{\interval{(}{#1}{#2}{)}}
\newcommand{\intinteger}[2]{\interval\llbracket{#1}{#2}\rrbracket}
\newcommand{\smallo}[1]{o\mathopen{}\left(#1\right)}
\newcommand{\bigO}[1]{O\mathopen{}\left(#1\right)}
\newcommand{\abs}[1]{\left\lvert#1\right\rvert}
\newcommand{\norm}[1]{\left\lVert#1\right\rVert}
\newcommand*\widefbox[1]{\fbox{\hspace{1em}#1\hspace{1em}}}
\newcommand{\setst}[2]{\left\{#1\mathrel{}\middle|\mathrel{}#2\right\}}
\newcommand{\et}{\quad \text{and} \quad}
\newcommand{\hence}{\quad \text{hence} \quad}
\newcommand{\numberset}[1]{\mathbb{#1}}
\newcommand{\Z}{\numberset{Z}}
\newcommand{\R}{\numberset{R}}
\newcommand{\C}{\numberset{C}}
\newcommand{\diff}{\mathop{}\mathopen{}\mathrm{d}}
\DeclareMathOperator{\dd}{div}
\DeclareMathOperator{\curl}{curl}
\newcommand{\schatten}[1]{\mathfrak{S}_{#1}}
\newcommand{\FF}{\mathcal{F}_{\mathrm{PV}}} 
\newcommand{\hd}{\dot{H}^1_{\mathrm{div}}(\R^3)}
\newcommand{\Cgaugesobolev}{\dot{H}^1_\mathrm{div}\left(\R^3\right)}
\DeclareMathOperator{\tr}{tr}
\title[The PV-regularised Dirac vacuum in EM fields at positive temperature]{The Pauli-Villars-regularised Dirac vacuum in electromagnetic fields at positive temperature}
\author[William Borrelli]{William Borrelli}
\address{Politecnico di Milano, Dipartimento di Matematica, P.zza Leonardo da Vinci, 32, 20133 Milano, Italy}
\email{\href{mailto:william.borrelli@polimi.it}
{\nolinkurl{william.borrelli@polimi.it}}}
\author[Umberto Morellini]{Umberto Morellini}
\address{Università di Pisa, Dipartimento di Matematica, Largo Bruno Pontecorvo 5, 56127 Pisa, Italy}
\email{\href{mailto:umberto.morellini@dm.unipi.it}{\nolinkurl{umberto.morellini@dm.unipi.it}}}
\date{23 January 2026}
\begin{document}

\begin{abstract}
   In this paper we consider a model of the Dirac vacuum in classical electromagnetic fields at positive temperature. We adopt the Pauli-Villars regularisation technique in order to properly define the free energy of the vacuum, extending the previous work by the second named author on the purely magnetic case. This work is intended as a first step in understanding polarisation effects in the vacuum at positive temperature, in presence of both electrostatic and magnetic potentials. 
\end{abstract}

\maketitle

\tableofcontents

\addtocontents{toc}{\protect\setcounter{tocdepth}{1}}
\section{Introduction}
This article continues a series of works dealing with the Hartree–Fock approximation of quantum electrodynamics (QED) \cite{GraLewSer-2009-CMP, GraLewSer-2011-CMP, HaiLewSer-2005-CMP, HaiLewSer-2005-JPA, HaiLewSerSol-2007-PRA, HaiLewSol-2007-CPAM}, which originated in a seminal paper by Chaix and Iracane \cite{ChaIra-1989-JPB}. In those works, only the purely electrostatic case was considered. Later, the Dirac vacuum interacting with a magnetic field was also rigorously studied \cite{GraHaiLewSer-2012-JEDP, GraLewSer-2018-JMPA}. As described below, in various situations it is interesting to consider the same kind of problem in a thermal background. A first step in this direction is the work \cite{Mor-2025-AHP} on the purely magnetic case. In the present paper, we extend this analysis to include an electrostatic potential, with the aim of initiating a more comprehensive study of polarisation effects in the vacuum at positive temperature.
\medskip

The non-linear polarisable nature of the vacuum in quantum field theory is a well-established fact since a long time \cite{Dir-1934-MPCPS, EulKoc-1935-DN, Wei-1936-MFM}. In this context, one replaces the classical action with an effective one accounting for quantum corrections, still guaranteeing the validity of the principle of least action. In quantum electrodynamics, a well-known way to implement this strategy is by integrating out fermions in the path integral formulation (see for instance \cite[Ch.~33]{Sch-2013-book}). As a result, we obtain a functional of a classical electromagnetic field treated as an external one. In \cite{GraHaiLewSer-2013-ARMA}, the authors derive in a rigorous way the effective Lagrangian action for time-independent fields in the Coulomb gauge at zero temperature,
\begin{multline*}
    \mathcal{L}\left(\boldsymbol{A}\right)=-\mathcal{F}_\mathrm{vac}\left(e\boldsymbol{A}\right)+e\int_{\R^3}\left(j_\mathrm{ext}\cdot A\left(x\right)-\rho_\mathrm{ext}\left(x\right)V\left(x\right)\right)\diff x\\
    +\frac{1}{8\pi}\int_{\R^3}\left(|E\left(x\right)|^2-|B\left(x\right)|^2\right)\diff x\,,
\end{multline*}
where, in particular, the first term on the right-hand side corresponds to the energy of the vacuum in external electromagnetic fields.
Here $e$ is the elementary charge of an electron, $\boldsymbol{A}=\left(V,A\right)$ is a classical $\R^4-$valued electromagnetic potential with corresponding field $\boldsymbol{F}=\left(E,B\right)=\left(-\nabla V,\mathrm{curl}A\right)$, and $\rho_\mathrm{ext}$ and $j_\mathrm{ext}$ are given external charge and current densities. The corresponding Euler-Lagrange equations are given by the usual Maxwell equations with non-linear and non-local correction terms. Those quantum corrections are negligible at usual energy scales and start to be significant only for extremely intense electromagnetic fields. Indeed, the threshold above which the electromagnetic field starts behaving in a non-linear way is known as \textit{Schwinger limit} and it is several orders of magnitude above what can be currently produced in a laboratory ($E_c\simeq 10^{18}\,V/m$ and $B_c\simeq 10^9\,T$). To give an idea, such an electric field would accelerate a proton from rest to the maximum energy attained by protons at the Large Hydron Collider in only approximately $5$ micrometers. However, the detection of these non-linear effects is still a very active area of experimental research \cite{BurFieHorSpe-1997-PRL, MouTajBul-2006-RMP}. Moreover, it is known that such a magnetic field strength is attained on the surface of magnetars, that is neutron stars with really large magnetic field, where these terms play an important role \cite{BarHar-2001-TAJ, DenSve-2003-AA, MarBroSte-2003-PRL}. In particular, the observation of one of these effects has been recently announced \cite{FanKamInaYam-2017-EPJD, MigTestGonTav-2016-MNRAS}.\par
The study of positive-temperature effects in quantum field theory attracted a considerable interest in the Physics literature in the past decades.
Indeed, in the non-perturbative physical examples above, the fields exist in a thermal bath or in some non-equilibrium background which is very different from the vacuum. Several authors \cite{DolJac-1974-PRD, KirLin-1972-PLB, Wei-1974-PRD} considered systems of elementary particles described by a quantum field theory in a positive-temperature background. They found that while specific symmetries are spontaneously broken at zero temperature (such as those pertaining weak interactions), they may be restored at sufficiently high temperatures. Moreover, the critical temperature characterising such phase transition has been calculated. In order to do this, one needs to know the Feynman rules for a field theory at finite temperature. For a non-gauge theory, they can be derived using well-known methods \cite{FetWal-1971-book}. However, for a gauge theory, various problems arise and a more powerful technique is needed  \cite{Ber-1974-PRD}. We mention that, in any case, the Feynman rules at finite temperature $T$ are strictly related to the Kubo-Martin-Schwinger (KMS) condition (see also Remark~\ref{stat:KMS-condition}) and concretely consist in the zero-temperature rules with the following formal replacements:
\begin{align}\notag
    \int\frac{\diff^4 p}{\left(2\pi\right)^4}&=\frac{i}{\beta}\sum_n\int\frac{\diff^3 p}{\left(2\pi\right)^3},\\\label{eq:T-Feynman-rules}
    p_0&=i\omega_n,\\\notag
    \left(2\pi\right)^4\delta^4\left(p_1+p_2+\ldots\right)&=\frac{\beta}{i}\left(2\pi\right)^3\delta\left(\omega_{n_1}+\omega_{n_2}+\ldots\right)\delta^3\left(\Vec{p_1}+\Vec{p_2}+\ldots\right),
\end{align}
where $\beta=1/T$ and
\[
\omega_n=
\begin{cases}
    \frac{2n\pi i}{\beta}\quad\text{for bosons},\\
    \frac{\left(2n+1\right)\pi i}{\beta}\quad\text{for fermions}\,
\end{cases}
\]
depending on the statistics of the particles involved.\par

We also mention that Dittrich studied thermal effects in quantum electrodynamics in \cite{Dit-1979-PRD}, calculating the one-loop effective potential at finite temperature for scalar and spinor QED in the presence of a constant magnetic field, namely the Euler-Heisenberg Lagrangian at positive temperature. In the language of thermodynamics, such effective potential represents the contribution of the vacuum energy to the total free energy in presence of an external constant field. Later, the thermal effective action has been analysed in order to deal with finite temperature and chemical potential, under more general assumptions \cite{Dit-1979-PRD,ElmPerSka-1993-PRL, ElmSka-1995-PLB, Sch-1951-PR}. 
We refer the reader to \cite[Chapter~3.5]{DitGie-2000-book} for a complete review about QED at finite temperature.\par
The aim of this paper is to properly define the vacuum free energy of a quantised Dirac field in (classical) electromagnetic fields at positive temperature, and to study some of its properties. As usual in this context, this quantity suffers from divergences in the high-energy regime which need to be regularised. For this purpose, we employ the Pauli-Villars technique, as done in \cite{GraHaiLewSer-2013-ARMA, Mor-2025-AHP}. In particular, as already mentioned, we extend the analysis of the purely magnetic case in the latter reference, dealing with the presence of an additional electrostatic field.\par

To conclude, we observe that QED models of Hartree-Fock type have been applied to the study of relativistic molecular dynamics 
\cite{Mor-2025-JDE}, while various properties of graphene and related systems have been described in \cite{BorMor-2025-JPA, HaiLewSpa-2012-JMP}, formally treated as 2D QED models. Very recently, a one-dimensional version has been numerically studied, as a first step towards the concrete implementation of an effective QED theory for atoms and molecules \cite{AudMorLevTou-2025-JPA}.
\addtocontents{toc}{\protect\setcounter{tocdepth}{1}}
\subsection*{Organisation of the paper}
We start by introducing the general framework and deriving the model under study in \cref{sec:der}. Then, in \cref{sec:main_thm} we state our main result on the rigorous definition of the free energy of the Dirac vacuum at positive temperature, under suitable assumptions on the electromagnetic field. This is the content of \cref{thm:main}, whose proof is given in \cref{sec:free_energy}.

\addtocontents{toc}{\protect\setcounter{tocdepth}{2}}
\subsection{Derivation of the model}\label{sec:der}
Our aim is to rigorously define the free energy of the QED vacuum at positive temperature, in the presence of electromagnetic fields. For this purpose, we first introduce the model in the zero-temperature case and recall some basic properties of Dirac operators which will be needed later.
\medskip

We consider a fermionic second-quantised field in a given \emph{classical} electromagnetic potential $\bm A=(V,A)$, where $V$ and $A$ are the electrostatic and the magnetic potential, respectively. The Hamiltonian of the field at zero temperature reads

\begin{equation}\label{eq:field-hamiltonian}
    \mathbb{H}^{e\boldsymbol{A}}=\frac{1}{2}\int_{\R^3}\left(\Psi^*\left(x\right)D^{e\boldsymbol{A}}_m\Psi\left(x\right)-\Psi\left(x\right)D^{e\boldsymbol{A}}_m\Psi^*\left(x\right)\right)\diff x,
\end{equation}
where $\Psi\left(x\right)$ is the second-quantised field operator, physically interpreted as the annihilation of an electron at $x$ and which satisties the following anticommutation relation:
\[
\Psi^*\left(x\right)_\sigma \Psi\left(y\right)_\nu +\Psi\left(y\right)_\nu \Psi^*\left(x\right)_\sigma =2\delta_{\sigma,\nu}\delta\left(x-y\right).
\]
Here $1\leq\sigma,\nu\leq 4$ are the spin variables and $\Psi\left(x\right)_\sigma$ is an operator-valued distribution. Formally, the Hamiltonian $\mathbb{H}^{e\boldsymbol{A}}$ acts on the fermionic Fock space,
\[
\mathcal{F}=\C\oplus\bigoplus_{N\geq 1}\bigwedge_1^N L^2\left(\R^3,\C^4\right),
\]
and is charge conjugation invariant,
\[
\operatorname{C}\mathbb{H}^{e\boldsymbol{A}}\operatorname{C}^{-1}=\mathbb{H}^{-e\boldsymbol{A}},
\]
where $\operatorname{C}$ is the charge-conjugation operator in Fock space as defined in \cite[Formula~(1.81)]{Tha-1992-book}. Notice that the integrand in \eqref{eq:field-hamiltonian} is interpreted as follows
\begin{multline*}
    \Psi^*\left(x\right)D^{e\boldsymbol{A}}_m\Psi\left(x\right)-\Psi\left(x\right)D^{e\boldsymbol{A}}_m\Psi^*\left(x\right)\\
    =\sum_{\mu,\nu=1}^4\left(\Psi^*\left(x\right)_\mu \left(D^{e\boldsymbol{A}}_m\right)_{\mu,\nu}\Psi\left(x\right)_\nu-\Psi\left(x\right)_\mu\left(D^{e\boldsymbol{A}}_m\right)_{\mu,\nu}\Psi^*\left(x\right)_\nu\right) \,,
\end{multline*}\par
where
\[
D^{e\boldsymbol{A}}_m=\boldsymbol{\alpha}\cdot\left(-i\nabla-eA\right)+eV+m\beta
\]
is the electromagnetic Dirac operator acting on $L^2\left(\R^3,\C^4\right)$, for a fermion of mass $m$. We work in a system of units such that the speed of light and the reduced Planck constant are both set equal to one, $c=\hbar=1$. The four Dirac matrices $\boldsymbol{\alpha}=\left(\alpha_1,\alpha_2,\alpha_3\right)$ and $\beta$ are given by
\[
\alpha_k=\begin{bmatrix}
0 & \sigma_k\\
\sigma_k & 0
\end{bmatrix},\quad
\beta=\begin{bmatrix}
I_2 & 0\\
0 & -I_2
\end{bmatrix},
\]
with Pauli matrices $\sigma_1$, $\sigma_2$ and $\sigma_3$ defined as
\[
\sigma_1=\begin{bmatrix}
0 & 1\\
1 & 0
\end{bmatrix},\quad
\sigma_2=\begin{bmatrix}
0 & -i\\
i & 0
\end{bmatrix},\quad
\sigma_3=\begin{bmatrix}
1 & 0\\
0 & -1
\end{bmatrix}.
\]
The \emph{free} Dirac operator $D^0_m$, \emph{i.e.} with $\bm A\equiv 0$, is self-adjoint on the Sobolev space $H^1(\R^3,\C^4)$, and its spectrum is purely absolutely continuous and given by
\begin{equation}\label{eq:free-Dirac-spectrum}
\sigma(D^0_m)=(-\infty,-m]\cup[m,+\infty)\,,
\end{equation}
see \cite{Tha-1992-book}. This led Dirac to postulate that the vacuum is filled with infinitely many virtual particles occupying the negative energy states (\textit{Dirac sea}) \cite{Dir-1930-PRSL, Dir-1934-MPCPS, Dir-1934-SR}. As a consequence, a real free electron cannot be in a negative state due to the Pauli exclusion principle. In addition to that, Dirac also conjectured the existence of "holes" in the Dirac sea interpreted as positrons, having a positive charge and a positive energy. Another prediction of the theory was the phenomenon of vacuum polarisation: while the free vacuum is electrally neutral, in presence of an external electrostatic field the virtual electrons respond to the applied field and the vacuum acquires a non-constant density of charge. Actually, then polarised vacuum itself modifies the electrostatic field and the virtual electrons react to the corrected field. We refer the reader to \cite{EstLewSer-2008-BAMS} for a detailed review about the difficulties arising from the negative part of the spectrum of the free Dirac operator.\par

The electromagnetic field is given by the vector
\[
\boldsymbol{F}=(E,B):=\left(-\nabla V,\mathrm{curl}A\right)\,,
\]
and thus it is natural to introduce the following Coulomb-gauge homogeneous Sobolev space:
\[
\Cgaugesobolev=\setst{\boldsymbol{A}=\left(V,A\right)\in L^6\left(\R^3,\R^4\right)}{\mathrm{div}A=0\text{ and }\boldsymbol{F}=\left(-\nabla V,\mathrm{curl}A\right)\in L^2\left(\R^3,\R^6\right)},
\]
endowed with its norm
\begin{equation}\label{eq:H1div-norm}
\|\boldsymbol{A}\|^2_{\Cgaugesobolev}=\|\nabla V\|^2_{L^2\left(\R^3\right)}+\|\mathrm{curl}A\|^2_{L^2\left(\R^3\right)}=\|\boldsymbol{F}\|^2_{L^2\left(\R^3\right)}.
\end{equation}
The equation $\mathrm{div}A=0$ should be interpreted in a distributional sense, and fixes the Coulomb gauge for the potential, while the assumption $\boldsymbol{F}\in L^2\left(\R^3\right)$ means that the electromagnetic field has a finite energy, namely
\[
\int_{\R^3}|E|^2+|B|^2 \,\diff x<\infty.
\]
Concerning the basic spectral properties of the electromagnetic Dirac operator $D^{e\boldsymbol{A}}_m$, we state the following lemma, whose proof can be found in \cite[Lemma~2.1]{GraHaiLewSer-2013-ARMA}:
\begin{lem}\label{stat:Dirac-spectral-properties}
    Let $m>0$ and $\boldsymbol{A}\in\Cgaugesobolev$.
    \begin{enumerate}[i)]
        \item The operator $D^{e\boldsymbol{A}}_m$ is self-adjoint on $H^1\left(\R^3,\C^4\right)$ and its essential spectrum is
        \[
        \sigma_\mathrm{ess}\left(D^{e\boldsymbol{A}}_m\right)=\intoc{-\infty}{-m}\cup\intco{m}{\infty}.
        \]
        \item There exists a universal constant $C$ such that, if $\|\boldsymbol{A}\|_{\Cgaugesobolev}\leq\eta\sqrt{m}$, for some number $\eta<1/C$, then
        \[
        \sigma\left(D^{e\boldsymbol{A}}_m\right)\cap\intoo{-m\left(1-C\eta\right)}{\left(1-C\eta\right)m}=\varnothing.
        \]
        \item Finally, if $V=0$, then $\sigma\left(D^{e\boldsymbol{A}}_m\right)\cap\intoo{-m}{m}=\varnothing$.
    \end{enumerate}
\end{lem}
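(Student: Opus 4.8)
The plan is to treat $D^{e\boldsymbol A}_m$ as a perturbation of the free operator, writing $D^{e\boldsymbol A}_m = D^0_m + W$ with $W := eV - e\boldsymbol\alpha\cdot A$ a (matrix-valued) multiplication operator whose entries lie in $L^6(\R^3)$ by the very definition of $\Cgaugesobolev$ together with the Sobolev embedding $\norm{\boldsymbol A}_{L^6}\le C\norm{\boldsymbol F}_{L^2}=C\norm{\boldsymbol A}_{\Cgaugesobolev}$ (for the magnetic part one uses $\norm{\nabla A}_{L^2}=\norm{\mathrm{curl}A}_{L^2}$ in the Coulomb gauge). The whole lemma then rests on the fact that, although $L^6$ is scale-critical for the first-order operator $D^0_m$, the perturbation is still operator-bounded relative to $D^0_m$ with arbitrarily small relative bound. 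Concretely, for $\psi\in H^1(\R^3,\C^4)$ one combines Hölder's inequality $\norm{W\psi}_{L^2}\le Ce\norm{\boldsymbol A}_{L^6}\norm{\psi}_{L^3}$ with the Gagliardo--Nirenberg inequality $\norm{\psi}_{L^3}\le C\norm{\nabla\psi}_{L^2}^{1/2}\norm{\psi}_{L^2}^{1/2}$ and the elementary bound $\norm{\nabla\psi}_{L^2}\le\norm{D^0_m\psi}_{L^2}$ (which follows from $(D^0_m)^2=-\Delta+m^2$). Young's inequality then yields, for every $\varepsilon>0$,
\begin{equation*}
\norm{W\psi}_{L^2}\le\varepsilon\norm{D^0_m\psi}_{L^2}+C_\varepsilon\norm{\psi}_{L^2},
\end{equation*}
so the Kato--Rellich theorem immediately gives the self-adjointness claim of part~(i): $D^{e\boldsymbol A}_m$ is self-adjoint on the domain $H^1(\R^3,\C^4)$ of $D^0_m$.

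For the identification of the essential spectrum in part~(i) I would invoke Weyl's theorem on the stability of the essential spectrum under relatively compact perturbations; it suffices to show that $W(D^0_m-z)^{-1}$ is compact for one (hence every) $z\in\rho(D^0_m)$. Here the main technical obstacle appears, precisely because $L^6$ is critical and a bounded $L^6$ potential need not produce a compact operator on its own. The remedy is an approximation argument: one first proves compactness for a smooth, compactly supported potential $\boldsymbol A_n$, where $W_n(D^0_m-z)^{-1}$ factors through multiplication by a compactly supported bounded function followed by the smoothing resolvent and is compact by the Rellich--Kondrachov theorem (or, quantitatively, by a Kato--Seiler--Simon estimate in a Schatten class); one then lets $\boldsymbol A_n\to\boldsymbol A$ in $L^6$ and uses the operator-norm bound $\norm{(W-W_n)(D^0_m-z)^{-1}}\le C\norm{\boldsymbol A-\boldsymbol A_n}_{L^6}$ coming from the same Hölder--Sobolev chain as above. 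Thus $W(D^0_m-z)^{-1}$ is a norm limit of compact operators, hence compact, and $\sigma_{\mathrm{ess}}(D^{e\boldsymbol A}_m)=\sigma_{\mathrm{ess}}(D^0_m)=\intoc{-\infty}{-m}\cup\intco{m}{\infty}$.

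For the quantitative gap in part~(ii) I would expand the resolvent: for $\lambda\in\intoo{-m}{m}$ the free resolvent $(D^0_m-\lambda)^{-1}$ exists, and $D^{e\boldsymbol A}_m-\lambda=\bigl(1+W(D^0_m-\lambda)^{-1}\bigr)(D^0_m-\lambda)$ is invertible as soon as $\norm{W(D^0_m-\lambda)^{-1}}<1$. To estimate this norm I would replace the Gagliardo--Nirenberg step by the homogeneous Sobolev embedding $\dot H^{1/2}(\R^3)\hookrightarrow L^3(\R^3)$, giving $\norm{W\psi}_{L^2}\le Ce\norm{\boldsymbol A}_{L^6}\norm{\,\abs{D^0_m}^{1/2}\psi}_{L^2}$, and then compute by functional calculus
\begin{equation*}
\norm{\,\abs{D^0_m}^{1/2}(D^0_m-\lambda)^{-1}}=\sup_{\abs{t}\ge m}\frac{\abs{t}^{1/2}}{\abs{t-\lambda}}=\frac{m^{1/2}}{m-\abs{\lambda}}.
\end{equation*}
Together with $\norm{\boldsymbol A}_{L^6}\le C\eta\sqrt m$ this produces $\norm{W(D^0_m-\lambda)^{-1}}\le C'\eta\,m/(m-\abs{\lambda})$, which is $<1$ precisely when $m-\abs{\lambda}>C'\eta m$; choosing the constant $C$ of the statement as a suitable multiple of $C'$ then forces $\intoo{-m(1-C\eta)}{(1-C\eta)m}\subset\rho(D^{e\boldsymbol A}_m)$ whenever $\eta<1/C$.

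Finally, part~(iii) is the cleanest, and is where the disappearance of the electrostatic term is decisive: when $V=0$ the operator is $D^{eA}_m=\boldsymbol\alpha\cdot\boldsymbol\pi+m\beta$ with $\boldsymbol\pi:=-i\nabla-eA$, and since each $\alpha_k$ anticommutes with $\beta$ one has the supersymmetric identity
\begin{equation*}
\bigl(D^{eA}_m\bigr)^2=(\boldsymbol\alpha\cdot\boldsymbol\pi)^2+m^2\ge m^2,
\end{equation*}
the inequality holding because $(\boldsymbol\alpha\cdot\boldsymbol\pi)^2$ is the square of a self-adjoint operator and hence non-negative (the Pauli term hidden in it is irrelevant at this level). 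By the spectral mapping theorem $\sigma\bigl((D^{eA}_m)^2\bigr)\subseteq[m^2,\infty)$, which is exactly $\sigma(D^{eA}_m)\cap\intoo{-m}{m}=\varnothing$. I expect the relative-compactness step in part~(i), and the uniform-in-$\lambda$ control of the resolvent in part~(ii), to be the only genuinely delicate points: the criticality of the $L^6$ potentials is what forces the Sobolev and functional-calculus bookkeeping above in place of a naive Kato--Rellich estimate.
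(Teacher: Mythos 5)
Your proof is correct, and it coincides in substance with the paper's: the paper gives no argument of its own but cites \cite[Lemma~2.1]{GraHaiLewSer-2013-ARMA}, whose proof rests on exactly the ingredients you use --- the H\"older--Sobolev bound $\norm{W\psi}_{L^2}\le C\norm{\boldsymbol{A}}_{L^6}\norm{\psi}_{L^3}$ with Kato--Rellich for self-adjointness, relative compactness of $W(D^0_m-z)^{-1}$ plus Weyl's theorem for the essential spectrum, and the supersymmetric identity $(D^{e\boldsymbol{A}}_m)^2=(\boldsymbol{\alpha}\cdot\boldsymbol{\pi})^2+m^2$ when $V=0$ for item (iii). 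The only cosmetic deviation is in item (ii), where you invert $1+W(D^0_m-\lambda)^{-1}$ by a Neumann series instead of deriving the direct lower bound $\norm{D^{e\boldsymbol{A}}_m\psi}_{L^2}\ge(1-C\eta)\norm{D^0_m\psi}_{L^2}\ge(1-C\eta)m\norm{\psi}_{L^2}$ as in the cited source; both variants turn on the same $L^6\times L^3\to L^2$ estimate and give the same constant.
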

\begin{rmk}\label{stat:rmk-spectral-gap}
Observe that item $ii)$ ensures the persistence of a spectral gap around $\lambda=0$, under suitable smallness assumptions on the electromagnetic \emph{field} (see \eqref{eq:H1div-norm}). This is consistent with the fact that, for arbitrary field strengths, other physical effects such as electron-positron pairs creation can occur. In particular, this is expected to be related to the electrostatic field. Indeed, item $iii)$ shows that the the spectrum \eqref{eq:free-Dirac-spectrum} is stable under purely magnetic perturbations. However, as we shall see later in this work, the discretising effect induced by the temperature, and related to the KMS condition (see also Remark~\ref{stat:KMS-condition}), seems to remove the necessity of a spectral gap at positive temperature, even in the presence of an electrostatic field.
\end{rmk}

The expectation value of the Hamiltonian \eqref{eq:field-hamiltonian} in any state in Fock space can be expressed as \cite[Equation~(2.8)]{GraHaiLewSer-2013-ARMA}
\begin{equation}\label{eq:expectation-value-energy}
    \langle\mathbb{H}^{e\boldsymbol{A}}\rangle=\tr_{L^2\left(\R^3,\C^4\right)}\left[D^{e\boldsymbol{A}}_m\left(\gamma-\frac{1}{2}\right)\right]=:E\left(\gamma\right),
\end{equation}
where $\gamma$ is the one-particle density matrix associated to a given state, namely
\[
\gamma\left(x,y\right)_{\sigma,\nu}=\langle\Psi^*\left(x\right)_\sigma \Psi\left(y\right)_\nu\rangle.
\]
The subtraction of half the identity in \eqref{eq:expectation-value-energy} comes from charge-conjugation invariance, so that the energy is expressed in terms of the renormalised density matrix $\gamma-1/2$. We refer the reader to \cite{HaiLewSol-2007-CPAM} for more details. Being fermions, electrons must obey the Pauli exclusion principle, encoded in the condition 
\begin{equation}\label{eq:Pauli-principle-dm}
0\leq\gamma\leq 1 \qquad\mbox{on} \quad L^2\left(\R^3,\C^4\right)\,.
\end{equation}
 On the other hand, any operator $\gamma$ verifying $0\leq\gamma\leq 1$ formally arises from a unique quasi-free state in the fermionic Fock space. Since the energy only depends on the density matrix $\gamma$ associated with the state, we can reformulate our problem by focusing on the operator $\gamma$ and on the energy \eqref{eq:expectation-value-energy}.\par
\smallskip
Recall that we are interested in studying the ground state of the vacuum at positive temperature $T>0$, and thus we need to consider the following free energy:
\begin{equation}\label{eq:definition-free-energy}
    F\left(\gamma\right)=E\left(\gamma\right)-T \tr\left[S\left(\gamma\right)\right],
\end{equation}
where
\[
S\left(x\right)=-x\log x-\left(1-x\right)\log\left(1-x\right)
\]
and $\tr\left[S\left(\gamma\right)\right]$ is a well-known entropy term studied in the literature (see e.g. \cite[Formula~(2c.10)]{BacLieSol-1994-JSP}, \cite[Formula~(4)]{HaiLewSei-2008-RMP} and \cite[Formula~(5)]{GonKouSer-2023-AHP}). Then, we are interested in the minimisation of the free energy functional \eqref{eq:definition-free-energy} with respect to $\gamma$. Notice that in this paper we consider the vacuum case, and therefore no additional constraint (other than \eqref{eq:Pauli-principle-dm}) is required. For atoms and molecules, one has to impose the condition
\[
\tr\left(\gamma-\frac{1}{2}\right)=N\,,
\]
which corresponds to the presence of $N$ real electrons.\par

In view of the above, we now derive the free energy of the Dirac vacuum at positive temperature $T$. Its proper definition constitutes the main result of the paper, stated in \cref{thm:main}.\par

Arguing as in \cite[Lemma~1.1]{GonKouSer-2023-AHP}, one sees that the only critical point $\gamma^*$ of the functional \eqref{eq:definition-free-energy}, which is also the unique minimiser by convexity, is given by
\[
\gamma^*=\frac{1}{1+e^{D^{e\bm A}_m /T}}=\frac{e^{-D^{e\bm A}_m /2T}}{2\cosh\left(D^{e \bm A}_m /2T\right)},\hence 1-\gamma^*=\frac{1}{1+e^{-D^{e\bm A}_m /T}}=\frac{e^{D^{e\bm A}_m /2T}}{2\cosh\left(D^{e\bm A}_m /2T\right)}.
\]
Then, the corresponding value of the free energy is formally given by
\begin{align}\notag
    F\left(\gamma^*\right)&:=\tr\left[D^{e\bm A}_m\left(\gamma^*-\frac{1}{2}\right)\right]-T\tr\left[S\left(\gamma^*\right)\right]\\ \label{eq:infinite-free-energy}
    &=-\frac{1}{\beta}\tr\left[\log\left(2\cosh\frac{\beta D^{e\bm A}_m}{2}\right)\right]=:\mathcal{F}\left(e\bm A,\beta\right).
\end{align}
As usual in this context, this free energy is infinite except if our model is settled in a box with an ultraviolet cut-off \cite{HaiLewSol-2007-CPAM}. In order to rigorously define \eqref{eq:infinite-free-energy}, we proceed as follows. First, we can subtract the (infinite) free energy of the free Dirac sea and define the \emph{relative free energy} as
\begin{equation}\label{eq:relative-free-energy}
    \mathcal{F}_\mathrm{rel}\left(e\bm A,\beta\right)=\frac{1}{\beta}\tr\left[\log\left(2\cosh\frac{\beta D^{0}_m}{2}\right)-\log\left(2\cosh\frac{\beta D^{e\bm A}_m}{2}\right)\right].
\end{equation}
This operation does not affect the variational problem under consideration. However, the functional \eqref{eq:relative-free-energy} is not well-defined yet, due to the the well-known ultraviolet divergences of the theory. Indeed, the operator $\log\left(2\cosh\left(\beta D^{0}_m/2\right)\right)-\log\left(2\cosh\left(\beta D^{e\bm A}_m/2\right)\right)$ is not trace-class as long as $\bm A\neq 0$, and this can be formally seen by expanding the trace in a power series of $e\bm A$. As we will see later, the first-order term in the expansion vanishes, while the second-order term is infinite due to ultraviolet divergences. Moreover, since the higher-order terms cannot prevent this issue, an UV cutoff is needed.\par
In this regard, the choice of the regularisation is extremely important. As shown below, thanks to the gauge invariance, several terms in the expansion of $\eqref{eq:relative-free-energy}$ as a power series of $e\bm{A}$ vanish. This is a further reason to preserve the gauge symmetry, in addition to its well-known physical meaning. In \cite{GraLewSer-2009-CMP}, the authors studied two approaches to handle UV divergences in the purely electrostatic case by imposing a cut-off. However, these methods are not applicable in the present setting, as they break gauge symmetry.\par

The regularisation technique we are going to use has been introduced by Pauli and Villars \cite{PauVil-1949-RMP} in $1949$. Of course, this is not the only possible choice (see for instance \cite{Lei-1975-RMP} for an alternative approach). The mentioned Pauli-Villars method consists in introducing $J$ fictitious particles into the model with very high masses $m_1,\ldots,m_J$, essentially playing the role of ultraviolet cut-offs. Indeed, note that $m$ has the dimension of a momentum since in our system of units $\hbar=c=1$. The sole purpose of these additional particles is to regularise the model at high energies, and they have no direct physical interpretation. Moreover, being very massive, they do not affect the low-energy regime. In our framework, the method consists in considering the following energy functional:
\begin{equation}\label{eq:PV-free-energy}
    \widetilde{\mathcal{F}}_\mathrm{PV}\left(e\bm A,\beta\right):=\frac{1}{\beta}\tr\left[\sum_{j=0}^J c_j\left(\log\left(2\cosh\frac{\beta D^{0}_{m_j}}{2}\right)-\log\left(2\cosh\frac{\beta D^{e\bm A}_{m_j}}{2}\right)\right)\right].
\end{equation}
We take $m_0=m$ to be the "real" mass, and $c_0=1$. The coefficients $c_j$ and $m_j$ are chosen so that
\begin{equation}\label{eq:general-PV-conditions}
    \sum_{j=0}^J c_j=\sum_{j=0}^J c_j m_j^2=0.
\end{equation}
It is well known in the Physics literature \cite{BjoDre-1965-book,GreRei-2009-book,PauVil-1949-RMP} that only two auxiliary fields are required to satisfy these conditions. Accordingly, we set $J=2$ throughout the remainder of this work in order to simplify the presentation. Thus, the condition \eqref{eq:general-PV-conditions} is equivalent to
\[
c_1=\frac{m_0^2-m_2^2}{m_2^2-m_1^2}\et c_2=\frac{m_1^2-m_0^2}{m_2^2-m_1^2}.
\]
Furthermore, we shall always assume that $m_0<m_1<m_2$, which implies that $c_1<0$ and $c_2>0$. The role of the constraint \eqref{eq:general-PV-conditions} is to remove the worst \emph{linear} ultraviolet divergences. Indeed, the Pauli-Villars regularisation still does not avoid a logarithmic divergence when one takes $m_1,m_2\xrightarrow{}\infty$. This can be understood defining the averaged ultraviolet cut-off $\Lambda$ as
\begin{equation}\label{eq:averaged-UV-cutoff}
    \log\left(\Lambda^2\right)=-\sum_{j=0}^2 c_j\log\left(m_j^2\right).
\end{equation}
Observe that, once the value of $\Lambda$ is fixed, the values of $m_1,m_2$ can not be uniquely determined. Concretely one usually chooses the masses as functions of the cut-off $\Lambda$, so that the coefficients $c_1,c_2$ stay bounded when $\Lambda\xrightarrow{}\infty$. As already mentioned, the persistent logarithmic divergence in the averaged cut-off $\Lambda$ can explicitly be seen in the second-order term in the expansion and will be studied later in this work (see item $c)$ in Remark~\ref{rmk:main}).  \par
In order to simplify the calculations, it is convenient to rewrite the Pauli-Villars-regularised free energy functional \eqref{eq:PV-free-energy} in an integral form. To this aim, we differentiate it with respect to $\beta$
\[
\frac{\partial}{\partial\beta}\left(\beta \widetilde{\mathcal{F}}_\mathrm{PV}\left(e\bm A,\beta\right)\right)=\frac{1}{2}\tr\left[\sum_{j=0}^2 c_j \left(D^{0}_{m_j} \tanh\frac{\beta D^{0}_{m_j}}{2}-D^{e\bm A}_{m_j} \tanh\frac{\beta D^{e\bm A}_{m_j}}{2}\right)\right],
\]
which gives
\begin{equation}\label{eq:integral-free-energy}
    \widetilde{\mathcal{F}}_\mathrm{PV}\left(e\bm A,\beta\right)=\frac{1}{2\beta}\int_0^\beta \tr\left[\sum_{j=0}^2 c_j \left(D^{0}_{m_j} \tanh\frac{b D^{0}_{m_j}}{2}-D^{eA}_{m_j} \tanh\frac{b D^{e\bm A}_{m_j}}{2}\right)\right]\diff b.
\end{equation}
Notice that, using conditions \eqref{eq:general-PV-conditions}, it is not hard to see that the value of $\widetilde{\mathcal{F}}_\mathrm{PV}$ goes to zero as $\beta$ goes to zero. However, we still need to modify the free energy functional. Indeed, following the study of the electromagnetic case at zero temperature \cite{GraHaiLewSer-2013-ARMA}, we will rather work with
\begin{equation}\label{eq:boxed-PV-free-energy}
\boxed{
 \mathcal{F}_\mathrm{PV}\left(e\bm A,\beta\right)=\frac{1}{2\beta}\int_0^\beta \tr\left(\tr_{\C^4}\left[\sum_{j=0}^2 c_j \left(D^{0}_{m_j} \tanh\frac{b D^{0}_{m_j}}{2}-D^{eA}_{m_j} \tanh\frac{b D^{e\bm A}_{m_j}}{2}\right)\right]\right)\diff b\,.}
\end{equation}
Notice that, compared to the previous expression, we have inserted a $\C^4-$trace in the integrand. As remarked in \cite{GraHaiLewSer-2013-ARMA}, this allows to ensure that the resulting operator is indeed trace-class in presence of the electrostatic potential. However, increasing the number of fictitious particles in the Pauli-Villars scheme should allow to remove problematic terms, imposing additional constraints on the masses. Nevertheless, we prefer to modify the functional and to work with two auxiliary masses, as typically done in the Physics literature.\par
\medskip
Let us stress that in the rest of the paper the elementary charge $e$ will be set to be equal to $1$, for the sake of simplicity.

\subsection{Main theorem}\label{sec:main_thm}
Our main result shows that the free energy functional $\bm A\mapsto \FF(\bm A,\beta)$ in \eqref{eq:boxed-PV-free-energy} is well-defined for a general four-potential $\bm A=(V,A)$ in the energy space $\hd$ (where we fix the Coulomb gauge condition $\dd A=0$). To further study its properties, we also require the electric field to be integrable, $E=-\nabla V\in L^1(\R^3)$, which is clearly a gauge-invariant assumption.

\begin{thm}[Rigorous definition of $\FF$]\label{thm:main}
Assume that the constants $c_j$ and $m_j$ verify
\[
c_0=1\,,\quad m_2>m_1>m_0>0
\]and
\begin{equation}\label{eq:PV-conditions}
\quad \sum^2_{j=0}c_j=\sum^2_{j=0}c_jm^2_j=0\,.
\end{equation}
Let
\begin{equation}\label{eq:TA}
T_{\bm A}(\beta)=\frac{1}{2}\sum^2_{j=0}c_j\left( D^0_{m_j}\tanh{\left(\frac{\beta D^0_{m_j}}{2} \right)}- D^{ \bm A}_{m_j}\tanh{\left(\frac{\beta D^{\bm A}_{m_j}}{2}\right)} \right)\,,
\end{equation}
with $\bm A=(V,A)$. Then
\begin{enumerate}[i)]
\item For every $\bm A\in L^1(\R^3,\R^4)\cap H^1(\R^3,\R^4)$, the operator $\tr_{\C^4}T_{\bm A}(\beta)$ is trace-class on $L^2(\R^3,\C)$. In particular, the functional $\FF(\bm A,\beta)$ is well-defined as
\begin{equation}\label{eq:FPV}
\boxed{
\FF(\bm A,\beta)=\frac{1}{\beta}\int^\beta_0 \tr\left(\tr_{\C^4}T_{\bm A}(b) \right)\diff b\,.}
\end{equation}
\item Let $\bm{A}\in L^1(\R^3,\R^4)\cap\hd$ and $E=-\nabla V\in L^1(\R^3,\R^3)$. Then, we have
\begin{equation}\label{eq:FPV-decomposition}
\boxed{
\FF(\bm A,\beta)=\mathcal F_2(\bm F,\beta)+\mathcal R(\bm A,\beta)\,,
}
\end{equation}
where $\bm F=(E,V)$, with $B=\curl A$. The functional $\mathcal R$ is continuous on $\hd$ and there exists a constant $\kappa=\kappa(\beta)$ such that 
\begin{equation}\label{eq:R-bound}
\vert \mathcal R(\bm A,\beta)\vert \leq \kappa\left(\left(\sum_{j=0}^2\frac{\abs{c_j}}{m_j}\right)\norm{\bm{F}}^4_{L^2\left(\R^3\right)}+\left(\sum_{j=0}^2\frac{\abs{c_j}}{m_j^2}\right)\norm{\bm{F}}^6_{L^2\left(\R^3\right)}\right).
\end{equation}
\item The functional $\mathcal F_2$ is the bounded quadratic form on $L^1(\R^3,\R^4) \cap  L^2(\R^3,\R^4)$ given by
\begin{empheq}[box=\widefbox]{align}   \label{eq:F2}
\begin{split}
   \mathcal F_2(\bm F,\beta)=\frac{1}{8\pi}\int_{\R^3}\left(M^0(k)+M^T(k,\beta)\right)\left(\vert \widehat B(k)\vert^2-\vert \widehat E(k)\vert^2\right)\, \diff k\\
+ \frac{1}{\beta}\int^\beta_0\int_{\R^3}\frac{\Gamma(k,b)}{\vert k\vert^2}\vert \widehat E(k)\vert^2\,\diff k\,\diff b\,,
    \end{split}
\end{empheq}
with
\begin{equation}\label{eq:Mzero}
M^0(k)=-\frac{2}{\pi}\int^1_0\sum^2_{j=0}c_j\log\left(m^2_j+u(1-u)k^2\right)\, u(1-u)\,du\,,
\end{equation}
\begin{equation}\label{eq:MT}
\begin{split}
M^T(k,\beta)&=-\frac{8}{\pi}\int^1_0 du\, u(1-u)\int^\infty_0\,dt\,\sum^2_{j=0}c_j \Bigg(\frac{1}{1+e^{-X_j(\beta,u,k)\cosh t)} } \\
&+ \frac{1}{X_j(\beta,u,k)\cosh t}\left(2\log2-\log\left(\left(1+e^{X_j(\beta,u,k)\cosh t}\right) \left(1+e^{-X_j(\beta,u,k)\cosh t}\right) \right) \right)\Bigg),
\end{split}
\end{equation}
where $X_j\left(\beta,u,k\right)=\beta\sqrt{m_j^2+u\left(1-u\right)k^2}$ for $j=0,1,2$, and $\Gamma \in C^0(\R^3\times(0,+\infty))$ is an explicit function of $(k,\beta)$ (see \eqref{eq:Gamma_sum} and the discussion before Lemma~\ref{lem:F2_proof}), such that, for any $\beta\in\intoo{0}{+\infty}$,
\begin{equation}\label{eq:Gamma-properties-thm}
\frac{\Gamma(\cdot,\beta)}{\vert \cdot\vert^2}\in L^1_{\mathrm{loc}}(\R^3)\,,\quad \mbox{and}\quad \frac{\Gamma(\cdot,\beta)}{\vert \cdot\vert^2}\in L^\infty(\{\vert k\vert\geq \delta\})\,,\quad\forall \delta>0\,.
\end{equation}
\item The functional $\FF$ can be uniquely extended to a continuous mapping on
\begin{equation}\label{eq:X}
X:=\{\bm A=(V,A)\in \hd\,:\, E=-\nabla V\in L^1(\R^3,\R^3)\}\,.
\end{equation}
\end{enumerate}
\end{thm}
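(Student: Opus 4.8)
The plan is to reduce everything to differences of resolvents of the \emph{squared} Dirac operators, exploiting the Matsubara (Mittag-Leffler) representation of the even function $x\mapsto x\tanh(\beta x/2)$, which is the positive-temperature analogue of the function $x\mapsto |x|$ used at zero temperature in \cite{GraHaiLewSer-2013-ARMA}. Functional calculus gives
\begin{equation*}
x\tanh\frac{\beta x}{2}=\frac{4}{\beta}\sum_{n\geq 0}\frac{x^2}{x^2+\omega_n^2}\,,\qquad \omega_n=\frac{(2n+1)\pi}{\beta}\,,
\end{equation*}
so that, writing $R^0_{j,n}=\bigl((D^0_{m_j})^2+\omega_n^2\bigr)^{-1}$ and $R^{\bm A}_{j,n}=\bigl((D^{\bm A}_{m_j})^2+\omega_n^2\bigr)^{-1}$, one has
\begin{equation*}
T_{\bm A}(\beta)=-\frac{2}{\beta}\sum_{j=0}^2 c_j\sum_{n\geq 0}\omega_n^2\bigl(R^0_{j,n}-R^{\bm A}_{j,n}\bigr)\,.
\end{equation*}
The discrete sum over the fermionic frequencies $\omega_n$ replaces the integral $\tfrac{2}{\pi}\int_0^\infty d\omega$ of the zero-temperature theory; this is exactly the discretising effect invoked in \cref{stat:rmk-spectral-gap}, and it is what will let us dispense with a spectral gap. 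A useful simplification is that $(D^0_{m_j})^2=(-\Delta+m_j^2)\,I_4$ acts as a \emph{scalar} on spinors, so each $R^0_{j,n}$ is a scalar Fourier multiplier; the whole spin dependence then sits in the perturbation $\mathcal W_j:=D^0_{m_j}W+WD^0_{m_j}+W^2$, where $W:=V-\bm\alpha\cdot A$ and $D^{\bm A}_{m_j}=D^0_{m_j}+W$.

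For part~i), I would expand each resolvent difference in powers of $W$ via the resolvent identity on the squared operators. The zeroth-order term cancels identically. The first-order term, after the $\C^4$-trace, reduces to a multiple of the magnetic contribution $A\cdot(-i\nabla)$ — the electrostatic part drops out because $\tr_{\C^4}(\bm\alpha\cdot p)=\tr_{\C^4}\beta=0$ — and this then vanishes by the Coulomb gauge condition $\dd A=0$; more conceptually, all odd orders vanish since $\FF$ is even in $\bm A$ by charge-conjugation invariance (Furry's theorem). Every surviving term carries at least two factors of $W$, so after inserting resolvents I would bound it in trace norm by splitting it into Hilbert-Schmidt factors of the form $g(x)h(-i\nabla)$ and applying the Kato-Seiler-Simon inequality, using $\bm A\in L^1(\R^3,\R^4)\cap H^1(\R^3,\R^4)$ to control the spatial factors. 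The $\C^4$-trace in \eqref{eq:FPV} is essential here: it produces the spin cancellations that make the (otherwise only Hilbert-Schmidt) second-order contribution trace-class in the presence of $V$, as observed in \cite{GraHaiLewSer-2013-ARMA}. This quadratic contribution is ultraviolet-divergent termwise and is rendered finite by the Pauli-Villars conditions \eqref{eq:PV-conditions}, whose cancellations $\sum_jc_j=\sum_jc_jm_j^2=0$ kill the linear and logarithmic divergences generated by the high-mass regime. Finally, I must sum over the Matsubara index: the prefactor $\omega_n^2$ is compensated by the decay of the resolvent products, the tail being controlled by $\sum_n\omega_n^{-2}<\infty$, which justifies interchanging $\sum_n$ with the trace.

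For parts~ii)--iii), I would continue the expansion, isolate the quadratic term as $\mathcal F_2$ and collect all contributions of order $\geq 4$ into $\mathcal R$ (the cubic term again vanishes by the evenness argument). The remainder is estimated as in part~i), now with absolutely convergent integrals, Hölder in Schatten spaces yielding the quartic and sextic bounds of \eqref{eq:R-bound}; the prefactors $\sum_j|c_j|/m_j$ and $\sum_j|c_j|/m_j^2$ arise from the mass dependence of the resolvent norms, and continuity of $\mathcal R$ on $\hd$ follows since the relevant Schatten norms depend continuously on $\bm F$. For the explicit form of $\mathcal F_2$, I would compute the spin trace of the quadratic term, pass to Fourier variables, and use a Feynman parametrisation $\int_0^1 du$ to combine the two scalar propagators — this is the source of the weight $u(1-u)$ and of the arguments $m_j^2+u(1-u)k^2$. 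The remaining radial momentum integral gives the zero-temperature polarisation function $M^0$ of \eqref{eq:Mzero}, while performing the Matsubara \emph{sum} (rather than the $\omega$-integral) produces both the thermal correction $M^T$ of \eqref{eq:MT} and the additional purely electric term $\Gamma/\vert k\vert^2$; gauge invariance forces the magnetic and electric pieces to enter only through $\vert\widehat B\vert^2-\vert\widehat E\vert^2$. Establishing \eqref{eq:Gamma-properties-thm} is then a direct, if delicate, analysis of the explicit integrand, the $\vert k\vert^{-2}$ singularity at the origin being locally integrable.

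For part~iv), I would extend $\FF$ from $L^1\cap H^1$ to $X$ by density and continuity. The decomposition \eqref{eq:FPV-decomposition} controls $\FF$ through $\mathcal R$ and the quadratic $\mathcal F_2$ in terms of $\norm{\bm F}_{L^2}=\norm{\bm A}_{\hd}$, once one checks finiteness of the electric contribution $\tfrac1\beta\int_0^\beta\int_{\R^3}\Gamma/\vert k\vert^2\,\vert\widehat E\vert^2$: here the hypothesis $E\in L^1(\R^3,\R^3)$ makes $\widehat E$ bounded and continuous, hence harmless near $k=0$ where $\Gamma/\vert k\vert^2\in L^1_{\mathrm{loc}}$, while the $L^\infty$ bound in \eqref{eq:Gamma-properties-thm} handles large $\vert k\vert$. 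Since $L^1\cap H^1$ is dense in $X$, uniqueness and continuity of the extension follow. I expect the \emph{main obstacle} to be part~i): obtaining the trace-class property in the presence of $V$. Unlike the purely magnetic case of \cite{Mor-2025-AHP}, the quadratic-in-$V$ contribution is the most ultraviolet-singular, and it is only the combination of the $\C^4$-trace, the Pauli-Villars cancellations, and the $L^1$-integrability of $\bm A$ — together with a uniform control of the Matsubara series, which has no zero-temperature counterpart — that makes it summable.
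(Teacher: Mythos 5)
Your proposal is correct and follows essentially the same route as the paper: the Mittag--Leffler/Matsubara representation of $x\tanh(\beta x/2)$, a resolvent expansion in the potential with Furry's theorem (charge-conjugation evenness) removing the odd orders, Kato--Seiler--Simon bounds combined with the Pauli--Villars cancellations and the \emph{prior} $\C^4$-trace to obtain the trace-class property, a Feynman-type parametrisation plus the Matsubara sum for the explicit quadratic term, and a density argument for the extension to $X$. Your only structural deviation --- expanding the squared resolvents $\bigl((D^0_{m_j})^2+\omega_n^2\bigr)^{-1}$ in powers of $\mathcal W_j$ instead of the paper's first-order resolvents $(D^0_{m_j}\pm i\omega)^{-1}$ in powers of $\bm\alpha\cdot A-V$ --- is an algebraically equivalent repackaging, since $\frac{i\omega}{D+i\omega}-\frac{i\omega}{D-i\omega}=\frac{2\omega^2}{D^2+\omega^2}$; the paper's choice has the practical advantage that the trace-norm estimates of \cite{GraHaiLewSer-2013-ARMA} carry over verbatim once the $\omega$-integrals are replaced by sums over $\omega(l,\beta)$. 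Two points deserve correction: first, your claim that ``gauge invariance forces the magnetic and electric pieces to enter only through $\vert\widehat B\vert^2-\vert\widehat E\vert^2$'' contradicts the purely electric term $\Gamma\vert\widehat E\vert^2/\vert k\vert^2$ that you (and the theorem) retain --- that combination would follow from Lorentz invariance, which the thermal background breaks, whereas gauge invariance only forces dependence on the fields $(E,B)$; second, beyond the convergence of $\sum_n\omega_n^{-2}$ at fixed temperature, one must check that the summed trace norm is $\bigO{b}$ as $b\to 0$, because $T_{\bm A}(b)$ carries a $1/b$ prefactor and \eqref{eq:FPV} averages over $b\in(0,\beta)$ --- this is exactly the point the paper emphasises in \eqref{eq:TA-trace-estimate} and in the proof of Lemma~\ref{lem:R_est}.
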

Some comments on the previous result are in order.
\begin{rmk}\label{rmk:main}\textcolor{white}{...}\\
\begin{enumerate}[a)]
\item We first establish in item $i)$ the well-posedness of the regularised free energy under non gauge-invariant assumptions. Then, in \eqref{eq:FPV-decomposition} we show that \eqref{eq:FPV} can be rewritten as the sum of a principal term involving the \emph{fields}, with a remainder depending on the potential but which can be suitably estimated in terms of the electromagnetic field (see \eqref{eq:R-bound}).
\item The quadratic term $\mathcal F_2$, rewritten as in \eqref{eq:F2}, describes the linear response of the vacuum to the external electromagnetic field.  
\item The multiplier $M^0(k)$ is the zero-temperature contribution, calculated in \cite{GraHaiLewSer-2013-ARMA}. It satisfies
\[
\lim_{\Lambda\to\infty}\left(\frac{2 \log\Lambda}{3\pi}- M^0(k) \right)=U(k):=\frac{\vert k\vert^2}{4\pi}\int^1_0\frac{z^2-z^4/3}{1+\vert k\vert^2(1-z^2)/4}\,\diff z\,,
\]
where the function on the right-hand side was first computed in the Physics literature by Serber \cite{Ser-1935-PR} and Uehling \cite{Ueh-1935-PR}.
\item The thermal contributions in the linear response multiplier are given by $M^T(k,\beta)$ and $\Gamma(k,\beta)$. The former has been computed in \cite{Mor-2025-AHP} for the purely magnetic case, while the latter is calculated in the present paper. Notice that this last term, contrary to the other two, has a potential singularity at zero momentum (see item $iii)$ in \cref{thm:main}). This fact suggests that the presence of the electrostatic field might induce a polarisation in the vacuum, leading to a screening effect (\emph{Debye screening}) for the charge generating the external field. Indeed, this fact has been proved in the purely electrostatic case in \cite{HaiLewSei-2008-RMP}, and it is expected to occur also in presence of an additional magnetic field. This problem will be investigated in a forthcoming work.
\item Observe that the finiteness of the last integral in \eqref{eq:F2} follows combining \eqref{eq:Gamma-properties-thm} and the observation that $E\in L^1(\R^3)\cap L^2(\R^3)$ gives $\widehat E\in C^0(\R^3)\cap L^2(\R^3)$.
\end{enumerate}
\end{rmk}

\addtocontents{toc}{\protect\setcounter{tocdepth}{2}}
\section{Rigorous definition of the free energy}\label{sec:free_energy}
The proof of \cref{thm:main} follows a strategy similar to that employed in the analogous results in \cite{GraHaiLewSer-2013-ARMA,Mor-2025-AHP}, and is organised into several intermediate results presented in the following subsections. We refer to those works for the parts of the argument that are very similar, repeating or only sketching them, when necessary, for the reader’s convenience. Our main focus will be on the analysis of the electrostatic terms.

\subsection{Integrable four-potentials: proof of item i) in \cref{thm:main}}\label{sec:proof_item_i)}
\begin{proof}[Proof of item i) in \cref{thm:main}]
Assume that
\begin{equation}\label{eq:integrable-potential}
\bm A\in L^1(\R^3,\R^4)\cap H^1(\R^3,\R^4)\,.
\end{equation}
We start from the definition of $T_{\bm A}$ in \eqref{eq:TA}, in order to prove that \eqref{eq:FPV} is well-defined. For this purpose, recall the following formula
\begin{align}
    \notag x\tanh{x}&=\sum_{l\in\Z}\frac{4x^2}{\left(2l-1\right)^2\pi^2+4x^2}\\\label{eq:xtanhx}
    &=\frac{1}{2}\sum_{l\in\Z}\left(2-\frac{i\left(2l-1\right)\pi}{2x+i\left(2l-1\right)\pi}+\frac{i\left(2l-1\right)\pi}{2x-i\left(2l-1\right)\pi}\right).
\end{align}
Using the functional calculus for a self-adjoint operator $S$ on $L^2\left(\R^3,\R^4\right)$, with domain $\operatorname{dom}\left(S\right)$, one gets
\[
S\tanh{S}=\frac{1}{2}\sum_{l\in\Z}\left(2-\frac{i\left(2l-1\right)\pi}{2S+i\left(2l-1\right)\pi}+\frac{i\left(2l-1\right)\pi}{2S-i\left(2l-1\right)\pi}\right).
\]
Observe that the above series is convergent, considered as an operator from $\operatorname{dom}\left(S^2\right)$ to the ambient Hilbert space, as
\[
\norm{\frac{4S^2}{4S^2+\left(2k-1\right)^2\pi^2}}_{D\left(S^2\right)\xrightarrow{}L^2\left(\R^3,\R^4\right)}\leq\min\left\{1,\left(\left(2k-1\right)\pi\right)^{-2}\norm{4S^2}_{D\left(S^2\right)\xrightarrow{}L^2\left(\R^3,\R^4\right)}\right\}.
\]
Since the domains of $\left(D^0_{m_j}\right)^2$ and $\left(D^{\bm A}_{m_j}\right)^2$ are both equal to $H^2\left(\R^3,\C^4\right)$, we can write
\begin{multline}\label{eq:TA-resolvent-difference}
    \tr_{\C^4}T_{\bm A}\left(\beta\right)=\frac{1}{2\beta}\sum_{l\in\Z}\sum_{j=0}^2 c_j\tr_{\C^4}\left(\frac{i\omega\left(l,\beta\right)}{D^{\bm A}_{m_j}+i\omega\left(l,\beta\right)}-\frac{i\omega\left(l,\beta\right)}{D^{\bm A
}_{m_j}-i\omega\left(l,\beta\right)}\right.\\
    \left.-\frac{i\omega\left(l,\beta\right)}{D^0_{m_j}+i\omega\left(l,\beta\right)}+\frac{i\omega\left(l,\beta\right)}{D^0_{m_j}-i\omega\left(l,\beta\right)}\right),
\end{multline}
on $H^2\left(\R^3,\C^4\right)$, setting
\[
\omega\left(l,\beta\right)=\frac{\left(2l-1\right)\pi}{\beta}\,,\qquad l\in\mathbb Z\,.
\]
Notice that $\omega\left(l,\beta\right)\neq 0$, $\forall l\in\Z$, and therefore we do not need any assumption on the gap of the electromagnetic Dirac operator $D^{\bm A}_m$ at this stage. This seems to be a regularising effect due to the positive temperature, which makes the $\omega-$parameter discrete and always non-zero (see Remark~\ref{stat:rmk-spectral-gap}).\\
In order to prove the theorem, we need to show that the right-hand side of \eqref{eq:TA-resolvent-difference} defines a trace-class operator, whose Schatten norm can be estimated as follows:
\begin{multline*}
    \sum_{l\in\Z}\left\lVert\sum_{j=0}^2 c_j \tr_{\C^4}\left(\frac{i\omega\left(l,\beta\right)}{D^{\bm A}_{m_j}+i\omega\left(l,\beta\right)}-\frac{i\omega\left(l,\beta\right)}{D^{\bm A}_{m_j}-i\omega\left(l,\beta\right)}\right.\right.\\
    \left.\left.-\frac{i\omega\left(l,\beta\right)}{D^0_{m_j}+i\omega\left(l,\beta\right)}+\frac{i\omega\left(l,\beta\right)}{D^0_{m_j}-i\omega\left(l,\beta\right)}\right)\right\rVert_{\schatten{1}}=\bigO{\beta},
\end{multline*}
as $\beta\xrightarrow{}0$. This can be proved under the assumption \eqref{eq:integrable-potential}, as at this stage the Coulomb gauge condition $\mathrm{div}\,A=0$ is not needed.

In order to do so, at fixed $l\in\mathbb Z$, it suffices to estimate the operator
\begin{multline}\label{eq:cal-R-operator}
    \mathcal{R}\left(\omega\left(l,\beta\right),\bm A\right):=\sum_{j=0}^2 c_j \tr_{\C^4}\left(\frac{i\omega\left(l,\beta\right)}{D^{\bm A}_{m_j}+i\omega\left(l,\beta\right)}-\frac{i\omega\left(l,\beta\right)}{D^{ \bm A}_{m_j}-i\omega\left(l,\beta\right)}\right.\\
    \left.-\frac{i\omega\left(l,\beta\right)}{D^0_{m_j}+i\omega\left(l,\beta\right)}+\frac{i\omega\left(l,\beta\right)}{D^0_{m_j}-i\omega\left(l,\beta\right)}\right),
\end{multline}
and then sum up the results. To this end, by using the second resolvent identity
\[
\frac{i\omega\left(l,\beta\right)}{D^{ \bm A}_{m_j}+i\omega\left(l,\beta\right)}-\frac{i\omega\left(l,\beta\right)}{D^0_{m_j}+i\omega\left(l,\beta\right)}=\frac{i\omega\left(l,\beta\right)}{D^{\bm A}_{m_j}+i\omega\left(l,\beta\right)}\left(\boldsymbol{\alpha}\cdot A-V\right)\frac{1}{D^0_{m_j}+i\omega\left(l,\beta\right)}
\]
and iterating it six times, we get
\begin{multline}\label{eq:R-6-terms}
    \mathcal{R}\left(\omega\left(l,\beta\right),\bm{A}\right)=\sum_{n=1}^5 \tr_{\C^4}\left(R_n\left(\omega\left(l,\beta\right),\bm{A}\right)+R_n\left(-\omega\left(l,\beta\right),\bm{A}\right)\right)\\
    +\tr_{\C^4}\left(R'_6\left(\omega\left(l,\beta\right),\bm{A}\right)+R'_6\left(-\omega\left(l,\beta\right),\bm{A}\right)\right),
\end{multline}
with
\begin{equation}\label{eq:R-n}
    R_n\left(\omega\left(l,\beta\right),\bm A\right)=\sum_{j=0}^2 c_j\frac{i\omega\left(l,\beta\right)}{D^0_{m_j}+i\omega\left(l,\beta\right)}\left(\left(\boldsymbol{\alpha}\cdot A-V\right)\frac{1}{D^0_{m_j}+i\omega\left(l,\beta\right)}\right)^n
\end{equation}
and
\begin{equation}\label{eq:R'-6}
    R'_6\left(\omega\left(l,\beta\right),\bm A\right)=\sum_{j=0}^2 c_j\frac{i\omega\left(l,\beta\right)}{D^{\bm A}{m_j}+i\omega\left(l,\beta\right)}\left(\left(\boldsymbol{\alpha}\cdot A-V\right)\frac{1}{D^0_{m_j}+i\omega\left(l,\beta\right)}\right)^6.
\end{equation}
As already remarked in \cite{Mor-2025-AHP}, by replacing the integral with respect to $\omega$ by the series with respect to $l\in\Z$, the same proof as in \cite[Theorem~2.1]{GraHaiLewSer-2013-ARMA} holds yielding
\begin{multline}\label{eq:TA-trace-estimate}
    \sum_{l\in\Z}\left(\sum_{j=1}^5\norm{\tr_{\C^4}R_n\left(\omega\left(l,\beta\right),\bm A\right)+\tr_{\C^4}R_n\left(-\omega\left(l,\beta\right),\bm A\right)}_{\schatten{1}}\right.\\
    \left.+\norm{\tr_{\C^4}R'_6\left(\omega\left(l,\beta\right),\bm A\right)+\tr_{\C^4}R'_6\left(-\omega\left(l,\beta\right),\bm A\right)}_{\schatten{1}}\right)=\bigO{\beta},
\end{multline}
as $\beta\xrightarrow{}0$, where the $\bigO{\beta}$ term above depends on the four-potential $\bm A$ through its $L^p-$norms, for $p\in\intinteger{1}{5}$, as well as the $L^2-$norm of the associated electromagnetic field. We emphasise once more that in this proof the $\C^4-$trace must be taken prior to the trace over $L^2\left(\R^3,\C\right)$ (see \cite[Equations~(3.28)~and~(3.32)]{GraHaiLewSer-2013-ARMA}).
\end{proof}
\subsection{Gauge-invariant estimates}
The proof in the previous subsection relies on the non gauge-invariant assumption \eqref{eq:integrable-potential}, as does the final estimate in \eqref{eq:TA-trace-estimate}, which involves $L^p-$norms of the four-potential. For this reason, we now aim to improve the result proved above by recovering gauge invariance. Our starting point is the following formula, obtained combining \eqref{eq:TA-resolvent-difference}, \eqref{eq:cal-R-operator} and \eqref{eq:R-6-terms}: 
\begin{equation}\label{eq:TA-sum}
    \begin{split}
    T_{\bm A}\left(\beta\right)=\sum_{n=1}^5 T_n\left(\bm A,\beta\right)+T'_6\left(\bm A,\beta\right)=&\frac{1}{2\beta}\sum_{n=1}^5\sum_{l\in\Z}\left(R_n\left(\omega\left(l,\beta\right),\bm A\right)+R_n\left(-\omega\left(l,\beta\right),\bm A\right)\right)\\
    &+\frac{1}{2\beta}\sum_{l\in\Z}\left(R'_6\left(\omega\left(l,\beta\right),\bm A\right)+R'_6\left(-\omega\left(l,\beta\right),\bm A\right)\right),
\end{split}
\end{equation}
with $R_n$, $R'_6$ given by \eqref{eq:R-n} and \eqref{eq:R'-6}.
Correspondingly, we first take the $\C^4-$trace, to get
\begin{equation}\label{eq:4trace}
    \tr_{\C^4}T_{\bm A}\left(\beta\right)=\sum_{n=1}^5 \tr_{\C^4}T_n\left(\bm A,\beta\right)+\tr_{\C^4}T'_6\left(\bm A,\beta\right)
\end{equation}

We are going to prove that the operators $\tr_{\C^4}T_n\left(\bm A,\beta\right)$ and $\tr_{\C^4}T'_6\left(\bm A,\beta\right)$ are trace-class, with estimates involving only suitable norms of the electromagnetic field $\bm F=(-\nabla V, \curl A)$. More precisely, we will first work with assumptions which do not preserve the gauge invariance, showing how to remove them by a density argument, in the last part of the proof. We will start observing that the integrals corresponding to odd-order terms in \eqref{eq:4trace} vanish, due to the charge-conjugation invariance. This fact is usually referred to as Furry's theorem in the Physics literature (see \cite{Fur-1937-PR} and \cite[Sec.~4.1]{GreRei-2009-book}). Then, we will focus on terms of order $4$ ad $6$, while a detailed analysis of the quadratic term is provided in \cref{sec:the-quadratic-term}.

\begin{lem}\label{stat:furry-odd-order}
    For $\bm A\in L^1\left(\R^3,\R^4\right)\cap H^1\left(\R^3,\R^4\right)$ and $n=1,3,5$, we have
    \[\begin{split}
        \frac{1}{\beta}&\int_0^\beta\tr\left(\tr_{\C^4}\left[T_n\left(\bm A,b\right)\right]\right)\diff b \\
        &=\frac{1}{2\beta}\int_0^\beta\sum_{l\in\Z}\tr\left(\tr_{\C^4}\left[R_n\left(\omega\left(l,b\right),\bm A\right)+R_n\left(-\omega\left(l,b\right),\bm A\right)\right]\right)\frac{\diff b}{b}=0\,.
        \end{split}
    \]
\end{lem}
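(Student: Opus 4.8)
The plan is to establish, for fixed $l\in\Z$ and $b\in(0,\beta)$ and odd $n$, the pointwise identity
$\tr\bigl(R_n(\omega(l,b),\bm A)+R_n(-\omega(l,b),\bm A)\bigr)=0$, and then sum over $l$ and integrate in $b$. The interchange of $\sum_{l\in\Z}$ with the traces and the harmless weight $\diff b/b$ are already licensed by item $i)$: under the hypothesis $\bm A\in L^1(\R^3,\R^4)\cap H^1(\R^3,\R^4)$, the estimate \eqref{eq:TA-trace-estimate} gives absolute summability of the $\schatten{1}$-norms, so no new analytic bound is needed. Since $\tr\bigl(\tr_{\C^4}[\,\cdot\,]\bigr)=\tr_{L^2(\R^3,\C^4)}[\,\cdot\,]$ on trace-class operators, I work with the full trace and set $Q:=R_n(\omega)+R_n(-\omega)$, suppressing $\bm A$ and writing $\omega=\omega(l,b)$.

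The first ingredient is self-adjointness of $Q$. Writing $S_j=(D^0_{m_j}+i\omega)^{-1}$, $R_j=S_j^*=(D^0_{m_j}-i\omega)^{-1}$ and $W=\boldsymbol\alpha\cdot A-V$, formula \eqref{eq:R-n} reads $R_n(\omega)=\sum_j c_j\,(i\omega)\,S_j(WS_j)^n$. As $D^0_{m_j}$ is self-adjoint, $\omega$ is real, $W^*=W$ and the $c_j$ are real, taking adjoints reverses the product and yields $R_n(\omega)^*=\sum_j c_j(-i\omega)(R_jW)^nR_j=\sum_j c_j(-i\omega)R_j(WR_j)^n=R_n(-\omega)$, where I used $(R_jW)^nR_j=R_j(WR_j)^n$. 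Hence $Q^*=Q$, so $\tr Q\in\R$.

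The second ingredient is the charge-conjugation (Furry) symmetry. Let $C$ denote the antiunitary one-particle charge-conjugation operator $C=i\beta\alpha_2 K$, with $K$ complex conjugation, cf.\ \cite[Formula~(1.81)]{Tha-1992-book}. A direct matrix computation gives $C D^0_{m_j}C^{-1}=-D^0_{m_j}$, while both the magnetic and the electrostatic parts of $W$ are invariant, $CWC^{-1}=W$ (consistently with $C D^{\bm A}_m C^{-1}=-D^{-\bm A}_m$). Antiunitarity gives $C(i\omega)C^{-1}=-i\omega$, so each leading factor $\tfrac{\pm i\omega}{D^0_{m_j}\pm i\omega}$ is invariant, whereas each of the $n$ internal resolvents $(D^0_{m_j}\pm i\omega)^{-1}$ changes sign. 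Therefore $C R_n(\pm\omega)C^{-1}=(-1)^n R_n(\pm\omega)$, i.e.\ $CQC^{-1}=(-1)^nQ=-Q$ for odd $n$.

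To conclude, recall that for an antiunitary $C$ and a trace-class operator $Q$ one has $\tr(CQC^{-1})=\overline{\tr Q}$; combined with $CQC^{-1}=-Q$ this gives $\overline{\tr Q}=-\tr Q$, so $\tr Q\in i\R$. Together with $\tr Q\in\R$ from the self-adjointness step, this forces $\tr Q=0$, which is precisely the asserted vanishing; summing over $l$ and integrating then finishes the proof. I expect the only delicate point to be the sign bookkeeping in the charge-conjugation identities $C D^0_{m_j}C^{-1}=-D^0_{m_j}$ and $CWC^{-1}=W$, together with the antiunitary rule $\tr(CQC^{-1})=\overline{\tr Q}$; everything analytic (trace-class of each $\tr_{\C^4}R_n$ and summability over $l$) is inherited from item $i)$.
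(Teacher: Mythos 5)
Your symmetry computations are all correct, and in substance this is the paper's own route: the paper's proof is a citation of \cite[Lemma~4.1]{GraHaiLewSer-2013-ARMA}, which is exactly this charge-conjugation (Furry) argument. Indeed, the identities $R_n(\omega)^*=R_n(-\omega)$, $CD^0_{m_j}C^{-1}=-D^0_{m_j}$, $CWC^{-1}=W$, and $CR_n(\pm\omega)C^{-1}=(-1)^nR_n(\pm\omega)$ all check out, as does the rule $\tr(CQC^{-1})=\overline{\tr Q}$ for antiunitary $C$ and trace-class $Q$. However, there is one genuine gap as written: the opening reduction ``since $\tr(\tr_{\C^4}[\,\cdot\,])=\tr_{L^2(\R^3,\C^4)}[\,\cdot\,]$ on trace-class operators, I work with the full trace''. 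That identity requires $Q=R_n(\omega,\bm A)+R_n(-\omega,\bm A)$ to be trace-class on $L^2(\R^3,\C^4)$, whereas item $i)$ and the estimate \eqref{eq:TA-trace-estimate} only give that the \emph{partially traced} operator $\tr_{\C^4}Q$ is trace-class on $L^2(\R^3,\C)$. These are not the same: for the low-order terms ($n=1,3$) the operator $Q$ itself is not known to lie in $\schatten{1}(L^2(\R^3,\C^4))$, and this is precisely why the $\C^4$-trace was inserted into the functional in the first place; the paper explicitly stresses that ``the $\C^4$-trace must be taken prior to the trace over $L^2(\R^3,\C)$''. So for $n=1,3$ your two conclusions ($\tr Q\in\R$ and $\tr Q\in i\R$) concern a quantity that may not be defined. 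Only for $n=5$ is the reduction harmless, since a product of five factors $W(D^0_{m_j}+i\omega)^{-1}$ with $W\in L^5(\R^3)$ is genuinely trace-class on the full space.

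The gap is fixable while keeping your argument intact: run it at the level of the partial trace. Writing $C=UK$ with $U=i\beta\alpha_2$ a \emph{constant} unitary $4\times4$ matrix and $K$ componentwise complex conjugation, an entrywise computation (viewing $Q$ as a $4\times 4$ matrix of operators $Q_{\mu\nu}$ on $L^2(\R^3,\C)$) shows that conjugation by $U\otimes 1$ leaves $\tr_{\C^4}$ invariant and that $\tr_{\C^4}(KQK)=\mathcal K(\tr_{\C^4}Q)\mathcal K$, where $\mathcal K$ is conjugation on $L^2(\R^3,\C)$; hence $\tr_{\C^4}(CQC^{-1})=\mathcal K(\tr_{\C^4}Q)\mathcal K$. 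Likewise $(\tr_{\C^4}Q)^*=\tr_{\C^4}(Q^*)=\tr_{\C^4}Q$. Your relation $CQC^{-1}=-Q$ then gives $\mathcal K(\tr_{\C^4}Q)\mathcal K=-\tr_{\C^4}Q$, and taking the $L^2(\R^3,\C)$-trace of the (now genuinely trace-class) operator $\tr_{\C^4}Q$ yields $\overline{\tr(\tr_{\C^4}Q)}=-\tr(\tr_{\C^4}Q)$, while self-adjointness gives $\tr(\tr_{\C^4}Q)\in\R$; hence $\tr(\tr_{\C^4}Q)=0$ pointwise in $l$ and $b$, and the summation over $l$ and the $b$-integration go through exactly as you indicate.
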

\proof
The proof is the same as in \cite[Lemma~4.1]{GraHaiLewSer-2013-ARMA}.
\endproof
Thus, thanks to Lemma~\ref{stat:furry-odd-order}, the Pauli-Villars regularised free energy functional defined in \eqref{eq:FPV} can be simplified as
\begin{align*}
    \FF(\bm A,\beta)&=\frac{1}{\beta}\int^\beta_0 \tr\left(\tr_{\C^4}T_{\bm A}(b) \right)\diff b\\
    &=\frac{1}{\beta}\int_0^\beta\tr\left(\sum_{n=1}^5 \tr_{\C^4}T_n\left(\bm A,b\right)+\tr_{\C^4}T'_6\left(\bm A,b\right)\right)\diff b\\
    &=\frac{1}{\beta}\int_0^\beta\tr\left(\tr_{\C^4}T_2\left(\bm A,b\right)+\tr_{\C^4}T_4\left(\bm A,b\right)+\tr_{\C^4}T'_6\left(\bm A,b\right)\right)\diff b,
\end{align*}
which can be rewritten as
\begin{equation}\label{eq:PVs}
    \FF\left(\bm{A},\beta\right)=\widetilde{\mathcal{F}}_2\left(\bm{A},\beta\right)+\mathcal{R}\left(\bm{A},\beta\right),
\end{equation}
where
\begin{equation}\label{eq:F2s}
    \widetilde{\mathcal{F}}_2\left(\bm{A},\beta\right)=\frac{1}{\beta}\int_0^\beta\tr\left(\tr_{\C^4}T_2\left(\bm A,b\right)\right)\diff b
\end{equation}
and
\begin{equation}\label{eq:Rs}
    \mathcal{R}\left(\bm{A},\beta\right)=\frac{1}{\beta}\int_0^\beta\tr\left(\tr_{\C^4}T_4\left(\bm A,b\right)+\tr_{\C^4}T'_6\left(\bm A,b\right)\right)\diff b.
\end{equation}
\cref{sec:the-quadratic-term} will be devoted to the study of the linear response term $\widetilde{\mathcal{F}}_2$ in \eqref{eq:PVs}, which will turn out to depend only on the electromagnetic \emph{field} $\bm{F}$. Instead, the rest of this subsection will focus on finding gauge invariant estimates for the remainder term $\mathcal{R}$. This is the content of the following lemma.
\begin{lem}\label{lem:R_est}
    The functional $\mathcal{R}$, defined in \eqref{eq:Rs}, is continuous on $\Cgaugesobolev$ and there exists a universal constant $\kappa=\kappa\left(\beta\right)$ such that
    \begin{equation}\label{eq:remainder-estimate}
        \abs{\mathcal{R}\left(\bm{A},\beta\right)}\leq \kappa\left(\left(\sum_{j=0}^2\frac{\abs{c_j}}{m_j}\right)\norm{\bm{F}}^4_{L^2\left(\R^3\right)}+\left(\sum_{j=0}^2\frac{\abs{c_j}}{m_j^2}\right)\norm{\bm{F}}^6_{L^2\left(\R^3\right)}\right).
    \end{equation}
\end{lem}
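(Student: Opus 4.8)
The plan is to split $\mathcal R=\mathcal R_4+\mathcal R_6'$ according to the decomposition \eqref{eq:Rs}, where $\mathcal R_4$ gathers the $T_4$ contribution and $\mathcal R_6'$ the $T_6'$ one; the odd-order pieces have already been discarded by Furry's theorem (Lemma~\ref{stat:furry-odd-order}), so only $n=4$ and the remainder $n=6$ survive. For each piece I would proceed in four stages: (1) a per-frequency Schatten estimate, freezing the Matsubara frequency $\omega(l,\beta)=(2l-1)\pi/\beta$ and bounding the $\schatten{1}$-norm of $\tr_{\C^4}R_4(\omega,\bm A)$ and of $\tr_{\C^4}R_6'(\omega,\bm A)$ (with $R_n$, $R_6'$ as in \eqref{eq:R-n}, \eqref{eq:R'-6}); (2) summation over $l\in\Z$; (3) replacement of the potential norms by the gauge-invariant field norm $\norm{\bm F}_{L^2}$; and (4) a density argument extending the bound from $L^1\cap H^1$ to all of $\Cgaugesobolev$.

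For stage (1) I would write each free resolvent as a matrix multiplier with scalar denominator,
\[
\frac{1}{D^0_{m_j}+i\omega}=\left(D^0_{m_j}-i\omega\right)\left(-\Delta+m_j^2+\omega^2\right)^{-1},
\]
whose symbol has size $\sim(\abs{p}^2+m_j^2+\omega^2)^{-1/2}$. The Kato--Seiler--Simon inequality then controls each pair $(\bm\alpha\cdot A-V)(D^0_{m_j}+i\omega)^{-1}$ in $\schatten{6}$ by $\norm{\bm A}_{L^6}(m_j^2+\omega^2)^{-1/4}$. For $\mathcal R_6'$ this closes immediately: the six potential insertions give six $\schatten 6$ factors whose product is trace-class, the leading factor $i\omega(D^{\bm A}_{m_j}+i\omega)^{-1}$ being bounded by $1$ in operator norm. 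This yields a per-frequency bound $\lesssim\norm{\bm A}_{L^6}^6(m_j^2+\omega^2)^{-3/2}$; summing over $l$ and comparing with $\tfrac{\beta}{\pi}\int(m_j^2+\omega^2)^{-3/2}\diff\omega\sim\beta\, m_j^{-2}$ produces the weight $\sum_j\abs{c_j}/m_j^2$ and the sextic power, after the homogeneous Sobolev embedding $\norm{\bm A}_{L^6}\lesssim\norm{\nabla\bm A}_{L^2}=\norm{\bm F}_{L^2}$, valid in the Coulomb gauge $\dd A=0$ (see \eqref{eq:H1div-norm}). Thus $\mathcal R_6'$ is the easy half and needs no reformulation.

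The genuinely delicate point, and what I expect to be the main obstacle, is the quartic term $\mathcal R_4$. Here the naive Kato--Seiler--Simon counting fails: with only four potential insertions the four $\schatten{6}$ pairings produce an operator in $\schatten{3/2}$ rather than $\schatten 1$, and the homogeneous energy space provides control of $\bm A$ only in $L^6$, not in the $L^4$ a direct $\schatten 4$ pairing would need. The resolution is the gauge-invariant reorganisation devised in \cite[Theorem~2.1 and Lemma~4.2]{GraHaiLewSer-2013-ARMA}: integrating by parts \emph{inside} the $\C^4$-trace — which, as stressed in the proof of item $i)$, must be taken before the trace over $L^2(\R^3,\C)$ — converts the bare potentials into the field strengths $E=-\nabla V$ and $B=\curl A$, the accompanying momenta being redistributed among the resolvents so as to reach $\schatten 1$ with genuine $\norm{\bm F}_{L^2}$ control. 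The inverse-mass weight $1/m_j$ accompanying $\norm{\bm F}^4_{L^2}$ (just as $1/m_j^2$ accompanies $\norm{\bm F}^6_{L^2}$) is then fixed by dimensional scaling of the quartic term together with the Matsubara summation $\sum_l(m_j^2+\omega^2)^{-1}\sim\beta\,m_j^{-1}$; note that the Pauli--Villars conditions \eqref{eq:PV-conditions} are not needed for this particular bound, which is obtained term by term in $j$ (hence the $\abs{c_j}$ and the absence of cancellation). At positive temperature the computation is carried out exactly as in \cite{Mor-2025-AHP}, the only structural change with respect to the zero-temperature argument being the replacement of the $\omega$-integral by the series over $l\in\Z$.

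Finally, summing the per-frequency estimates over $l$ at fixed $\beta$ yields a finite constant $\kappa=\kappa(\beta)$ (the convergence of the series being the fixed-$\beta$ counterpart of the $\bigO{\beta}$ statement \eqref{eq:TA-trace-estimate}), and integrating in $b$ over $(0,\beta)$ preserves the form of the inequality, giving \eqref{eq:remainder-estimate}. Since the right-hand side depends on $\bm A$ only through $\norm{\bm F}_{L^2}=\norm{\bm A}_{\Cgaugesobolev}$, and since the integrable class $L^1(\R^3,\R^4)\cap H^1(\R^3,\R^4)$ (in the Coulomb gauge) is dense in $\Cgaugesobolev$, the functional $\mathcal R$, a priori defined only on that class, extends uniquely to a continuous functional on $\Cgaugesobolev$ obeying the same bound; this simultaneously delivers the asserted continuity.
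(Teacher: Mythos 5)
Your proof of the bound \eqref{eq:remainder-estimate} follows essentially the same route as the paper's: the paper likewise splits $\mathcal R$ into the quartic and sextic contributions of \eqref{eq:Rs}, defers the hard analytic content to the zero-temperature estimates of \cite[Lemmas~4.4 and~4.5]{GraHaiLewSer-2013-ARMA} (your Kato--Seiler--Simon treatment of $T_6'$ and the gauge-invariant rewriting of the quartic term in terms of the fields are exactly what those lemmas contain), and implements the temperature by the replacement $\frac{1}{4\pi}\int_\R\diff\omega\mapsto\frac{1}{2\beta}\sum_{l\in\Z}$, checking that the resulting constants behave like $\bigO{b}$ as $b\to 0$, so that the singular prefactor in the average $\frac{1}{\beta}\int_0^\beta(\cdot)\,\diff b$ causes no harm; your remarks on the convergence of the Matsubara series at fixed $\beta$ and near $b=0$ are the same point, made in the paper through the displayed computation $\sum_{l}\omega(l,\beta)^2\left(m_j^2+\omega(l,\beta)^2\right)^{-2}\leq C\beta^2$. (A small slip: the fourth-order reorganisation is \cite[Lemma~4.4]{GraHaiLewSer-2013-ARMA}, not Theorem~2.1/Lemma~4.2 there --- Lemma~4.2 is the second-order computation.)

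The genuine gap is in your final paragraph, where the asserted continuity on $\hd$ is deduced from the bound plus density of $L^1\cap H^1$. For a nonlinear functional, boundedness on a dense subspace in terms of $\norm{\bm F}_{L^2}$ does not imply uniform continuity there, hence does not produce a (let alone unique) continuous extension; ``bound $+$ density'' is a non sequitur. What is actually needed is a polarised version of the same estimates, controlling $\abs{\mathcal R(\bm A_1,\beta)-\mathcal R(\bm A_2,\beta)}$ by $\norm{\bm F_1-\bm F_2}_{L^2}$ times a polynomial in $\norm{\bm F_1}_{L^2}$ and $\norm{\bm F_2}_{L^2}$: for the quartic term this is immediate by multilinearity (replace one potential slot by a difference of potentials), but for $T_6'$ one must in addition handle the dependence on $\bm A$ of the dressed resolvent $\left(D^{\bm A}_{m_j}+i\omega\right)^{-1}$ appearing in \eqref{eq:R'-6} via a further resolvent identity, since $\mathcal R$ is not a polynomial functional of $\bm A$. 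This is precisely how continuity is obtained in the zero-temperature lemmas that the paper's proof invokes, and your per-frequency Schatten bounds polarise with no extra effort, so the repair is routine --- but as written the final step fails.
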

\begin{proof}
     Almost the same proofs as in \cite[Lemma~4.4]{GraHaiLewSer-2013-ARMA} and in \cite[Lemma~4.5]{GraHaiLewSer-2013-ARMA} hold. In order to include a finite temperature, as previously remarked in \cite{Mor-2025-AHP}, recall that we simply have to make the following formal replacements:
    \begin{gather*}
        \frac{1}{4\pi}\int_\R\diff\omega\xrightarrow{}\frac{1}{2\beta}\sum_{k\in\Z},\\
        \omega\in\R\xrightarrow{}\omega\left(k,\beta\right)=\frac{\left(2k-1\right)\pi}{\beta},\,k\in\Z,
    \end{gather*}
    where $\beta=1/T\in\intoo{0}{\infty}$, and then to consider the averaged primitive function with respect to $\beta$ of each term in the expansion. After the substitutions, it is important to notice that the universal constant $K$ in \cite[Lemma~4.4]{GraHaiLewSer-2013-ARMA} and \cite[proof of Lemma~4.5]{GraHaiLewSer-2013-ARMA} translates into a positive-temperature constant which is $\bigO{\beta}$ as $\beta\xrightarrow{}0$ and this is crucial in order to get the finite constant $\kappa$ in \eqref{eq:remainder-estimate} (recall that one has to pay attention to the singularity of the factor $1/\beta$ because of the integration around zero). Indeed, after replacing the integrals over frequencies by series, for instance for the fourth-order term, the positive-temperature constant would depend on
    \begin{align*}
        \sum_{j=0}^2 \abs{c_j} \sum_{k\in\Z}\abs{\frac{\omega\left(k,\beta\right)^2}{\left(m_j^2+\omega\left(k,\beta\right)^2\right)^2}}&=\beta^2 \sum_{j=0}^2 \abs{c_j} \sum_{k\in\Z}\abs{\frac{\left(2k-1\right)^2 \pi^2}{\left(\beta^2 m_j^2+\left(2k-1\right)^2\pi^2\right)^2}}\\
        &\leq \frac{\beta^2}{\pi^2} \sum_{j=0}^2 \abs{c_j} \sum_{k\in\Z}\abs{\frac{1}{\left(2k-1\right)^2}}\leq C\beta^2,
    \end{align*}
    which clearly shows the constant to be $\bigO{\beta}$ as $\beta\to 0$. A similar argument works for the sixth-order term.
\end{proof}

\subsection{The quadratic term in \eqref{eq:FPV-decomposition}}\label{sec:the-quadratic-term}

We now compute exactly the second-order term $\mathcal{F}_{2}$ associated with the term $T_2\left(\bm A,\beta\right)$ appearing in the decomposition of $T_{\bm A}\left(\beta\right)$ in \eqref{eq:TA-sum}. Our analysis is the positive-temperature counterpart of that in \cite{GraHaiLewSer-2013-ARMA}, and extends the corresponding result in \cite{Mor-2025-AHP} in the presence of an additional electrostatic field. Accordingly, in what follows we focus only on the new terms with respect to \cite{Mor-2025-AHP}, in order to simplify the presentation. Our ultimate goal in this section is to prove item $iii)$ in \cref{thm:main} and, in particular, to compute the multiplier $\Gamma(k,\beta)$ appearing in \eqref{eq:F2}.\par

Before proceeding with the calculations, we recall some preliminary facts. First of all, as in \cite{Mor-2025-AHP} and as already noticed in the proof of Lemma~\ref{lem:R_est}, compared to the calculations at zero temperature in \cite{GraHaiLewSer-2013-ARMA}, here we make the following formal replacements:
\begin{equation}\label{eq:replacements}
\begin{split}
\frac{1}{4\pi}\int_\R\,\diff\omega&\mapsto\frac{1}{2\beta}\sum_{l\in\mathbb Z} \\
\R\ni\omega&\mapsto \omega(l,\beta)=\frac{(2l-1)\pi}{\beta}\,,\qquad l\in\mathbb Z\,,
\end{split}
\end{equation}
where $\beta=1/T\in(0,\infty)$, and then one considers the averaged primitive with respect to $\beta$ of each term appearing in the expansion.
\begin{rmk}\label{stat:KMS-condition}
The above replacements are related to the well-known KMS condition for quantum statistical systems at thermal equilibrium, prescribing anti-periodicity with respect to an imaginary time for Green's functions. In the present case the substitution is introduced by a different argument, purely mathematical, following from the representation formula \eqref{eq:xtanhx}.
\end{rmk}
Assume that 
\[
\bm A\in L^2(\R^3,\R^4) \cap \hd.
\]
Notice that, while in the previous subsections the Coulomb gauge condition $\mathrm{div}A=0$ was not essential, this will be the case in what follows. In view of \eqref{eq:TA-sum}, we have
\[
T_2(\bm A,\beta)=R_2(\omega(k,\beta),\bm A)+R_2(-\omega(k,\beta),\bm A)\,,
\]
so that, recalling the prescription \eqref{eq:replacements} and adapting the computations in \cite[Lemma 4.2]{GraHaiLewSer-2013-ARMA}, one finds
\begin{equation}\label{eq:F2-decomposition-T21-T22-T23}
    \tr\left(\tr_{\C^4}T_2(\bm A,\beta) \right)=\mathcal T_{2,1}(\bm A,\beta)+\mathcal T_{2,2}(\bm A,\beta)+\mathcal T_{2,3}(\bm A,\beta)\,,
\end{equation}
where
\begin{equation}\label{eq:T21}
\mathcal T_{2,1}(\bm A,\beta)\coloneqq-\frac{1}{2\beta\pi^3}\int_{\R^6}\sum^2_{j=0}c_j \sum_{l\in\mathbb Z}\frac{\vert \widehat A(k)\vert^2+\vert \widehat V(k)\vert^2}{(p^2+m^2_j+\omega(l,\beta)^2)^2}\omega(l,\beta)^2\,\diff k\diff p\,,
\end{equation}
and 
\begin{equation}\label{eq:T22}
\mathcal T_{2,2}(\bm A,\beta)=\mathcal T_{2,2}(A,\beta)+\mathcal T_{2,2}(V,\beta)
\end{equation}
with
\begin{equation}\label{eq:T22A}
\mathcal T_{2,2}( A,\beta)\coloneqq\frac{1}{2\beta\pi^3}\int_{\R^6}\sum^2_{j=0}c_j\sum_{l\in\mathbb Z}\frac{\vert \widehat A(k) \vert^2\, \omega(l,\beta)^2}{(p^2+m^2_j+\omega(l,\beta)^2)((p-k)^2+m^2_j+\omega(l,\beta)^2)}\,\diff k\diff p
\end{equation}
and
\begin{equation}\label{eq:T22V}
\mathcal T_{2,2}( V,\beta)= \mathcal{T}^1_{2,2}(V,\beta)+\mathcal{T}^2_{2,2}(V,\beta)\,,
\end{equation}
where, in turn,
\begin{equation}\label{eq:T22V1}
    \mathcal{T}^1_{2,2}(V,\beta)\coloneqq\frac{1}{2\beta\pi^3}\int_{\R^6}\sum^2_{j=0}c_j\sum_{l\in\mathbb Z}\frac{\vert \widehat V(k) \vert^2\omega(l,\beta)^2}{(p^2+m^2_j+\omega(l,\beta)^2)((p-k)^2+m^2_j+\omega(l,\beta)^2)}\,\diff k\diff p \,,
\end{equation}
and 
\begin{equation}\label{eq:T22V2}
\begin{split}
\mathcal{T}^2_{2,2}(V,\beta):&=\frac{3}{2\beta\pi^3}\int_{\R^6}\sum^2_{j=0}c_j\sum_{l\in\mathbb Z}\frac{\vert \widehat V(k) \vert^2\omega(l,\beta)^2}{(p^2+m^2_j+\omega(l,\beta)^2)((p-k)^2+m^2_j+\omega(l,\beta)^2)}\,\diff k\diff p \\
&-\frac{2}{\beta\pi^3}\int_{\R^6}\sum^2_{j=0}c_j\sum_{l\in\mathbb Z}\frac{\vert \widehat V(k) \vert^2\omega(l,\beta)^4}{(p^2+m^2_j+\omega(l,\beta)^2)^2((p-k)^2+m^2_j+\omega(l,\beta)^2)}\,\diff k\diff p\,.
\end{split}
\end{equation}
Finally,
\begin{equation}\label{eq:T23}
\mathcal T_{2,3}( \bm A,\beta)\coloneqq\frac{1}{2\beta\pi^3}\int_{\R^6}\sum^2_{j=0}c_j\sum_{l\in\mathbb Z}\frac{\left(k^2 \vert \widehat A(k) \vert^2+(k^2-4p\cdot k)\vert \widehat V(k)\vert^2\right)\omega(l,\beta)^2}{(p^2+m^2_j+\omega(l,\beta)^2)^2((p-k)^2+m^2_j+\omega(l,\beta)^2)}\,\diff k\diff p.
\end{equation}
We emphasise once more that here the property $\operatorname{div} A=0$ is needed to obtain \eqref{eq:F2-decomposition-T21-T22-T23} (see \cite[Proof of Lemma~4.2]{GraHaiLewSer-2013-ARMA}).\par
Collecting all the above terms except $\mathcal{T}^2_{2,2}(V,\beta)$ and repeating the arguments in \cite{GraHaiLewSer-2013-ARMA,Mor-2025-AHP} yields the first line of \eqref{eq:F2}, namely
\begin{equation}
\begin{split}\label{eq:sum-Txy-M0-MT}
\frac{1}{\beta}&\int_0^\beta\left(\mathcal{T}_{2,1}(\bm A,b)+\mathcal T_{2,2}(A,b)+\mathcal T^1_{2,2}(V,b)+\mathcal T_{2,3}(\bm A,b)\right)\diff b\\
&=\frac{1}{8\pi}\int_{\R^3}\left(M^0(k)+M^T(k,\beta)\right)\left(\vert \widehat B(k)\vert^2-\vert \widehat E(k)\vert^2\right)\, dk\,,
\end{split}
\end{equation}
where the functions $M^0(k)$ and $M^T(k,\beta)$ are given in \eqref{eq:Mzero} and \eqref{eq:MT}, respectively.
\smallskip

Compared to the cases treated in \cite{GraHaiLewSer-2013-ARMA,Mor-2025-AHP}, we have the additional term $\mathcal{T}^2_{2,2}(V,\beta)$ defined in \eqref{eq:T22V2}. The latter leads to the multiplier $\Gamma(k,\beta)$ in \eqref{eq:F2}, that we analyse in the rest of this section. Therefore, we deal with the term involving $\mathcal T^2_{2,2}(V,\beta)$, that we rewrite as
\begin{equation}\label{eq:integral-Gamma-V}
\frac{1}{\beta}\int_0^\beta \mathcal T^2_{2,2}(V,\beta)\,\diff b=\frac{1}{\beta}\int_0^\beta \int_{\R^3}\Gamma(k,\beta)\vert \widehat V(k)\vert^2\,\diff k\diff b \,,
\end{equation}
where the multiplier $\Gamma(k,\beta)$ is given by
\begin{equation}\label{eq:Gamma}
    \Gamma(k,\beta)=\frac{1}{2\beta\pi^3}\int_{\R^3}\sum^2_{j=0}c_j\sum_{l\in\mathbb Z}\frac{\left(3\left(p^2+m_j^2\right)-\omega\left(l,\beta\right)^2\right)\omega(l,\beta)^2}{\left(p^2+m^2_j+\omega\left(l,\beta\right)^2\right)^2\left(\left(p-k\right)^2+m^2_j+\omega\left(l,\beta\right)^2\right)}\,\diff p.
\end{equation}
In order to study the above integral, we exploit the following identity \cite[Chap. 5]{GreRei-2009-book}
\begin{equation*}
    \frac{1}{a^2 b}=\int_0^1\left(\int_0^\infty s^2 e^{-s\left(ua+\left(1-u\right)b\right)}\diff s\right)u\diff u\,,
\end{equation*}
taking
\[
a=p^2+m^2_j+\omega(l,\beta)^2\,,\qquad b=(p-k)^2+m^2_j+\omega(l,\beta)^2\,.
\]
Therefore,
\begin{align}
    \notag\Gamma\left(k,\beta\right)&=\Gamma_1\left(k,\beta\right)+\Gamma_2\left(k,\beta\right)+\Gamma_3\left(k,\beta\right)\\\label{eq:Gamma-1}
    &\coloneqq \frac{1}{2\beta\pi^3}\int_0^1\diff u\,u\int_0^\infty \diff s\,s^2\sum_{j=0}^2c_je^{-s\left(m_j^2+\left(1-u\right)k^2\right)}\left[\sum_{l\in\Z}\omega\left(l,\beta\right)^2e^{-s\omega\left(l,\beta\right)^2}\right]\int_{\R^3}\diff p\,3p^2e^{-s\left(p^2-2p\cdot k\left(1-u\right)\right)}\\\label{eq:Gamma-2}
    &+ \frac{3}{2\beta\pi^3}\int_0^1\diff u\,u\int_0^\infty \diff s\,s^2\sum_{j=0}^2c_jm_j^2e^{-s\left(m_j^2+\left(1-u\right)k^2\right)}\left[\sum_{l\in\Z}\omega\left(l,\beta\right)^2e^{-s\omega\left(l,\beta\right)^2}\right]\int_{\R^3}\diff p\,e^{-s\left(p^2-2p\cdot k\left(1-u\right)\right)}\\\label{eq:Gamma-3}
    &-\frac{1}{2\beta\pi^3}\int_0^1\diff u\,u\int_0^\infty \diff s\,s^2\sum_{j=0}^2c_je^{-s\left(m_j^2+\left(1-u\right)k^2\right)}\left[\sum_{l\in\Z}\omega\left(l,\beta\right)^4e^{-s\omega\left(l,\beta\right)^2}\right]\int_{\R^3}\diff p\,e^{-s\left(p^2-2p\cdot k\left(1-u\right)\right)}.
\end{align}

\subsubsection{Analysis of $\Gamma_1(k,\beta)$}
The sum in square brackets can be expressed in terms of the Jacobi theta function $\theta_2$ as
\[
\sum_{l\in\Z}\omega\left(l,\beta\right)^2e^{-s\omega\left(l,\beta\right)^2}=\sum_{l\in\Z}e^{-s\frac{4\pi^2}{\beta^2}\left(l-\frac{1}{2}\right)^2}\frac{4\pi^2}{\beta^2}\left(k-\frac{1}{2}\right)^2=-\frac{\diff}{\diff s}\theta_2\left(0,\frac{4\pi is}{\beta^2}\right),
\]
where $\theta_2$ is defined by
\begin{equation}\label{eq:jacobi-theta-2}
    \theta_2\left(z,\tau\right)=e^{i\tau \frac{\pi}{4}+i\pi z}\theta\left(z+\frac{\tau}{2},\tau\right)=\sum_{n=-\infty}^{+\infty}e^{i\pi z\left(2n+1\right)}e^{i\pi\tau\left(n+\frac{1}{2}\right)^2},
\end{equation}
with
\[
\theta\left(z,\tau\right)=\sum_{n=-\infty}^{+\infty}e^{2\pi i nz+\pi i n^2 \tau}\,
\]
(to our knowledge, it does not exist an unambiguous definition in the literature, see \cite[Chap.~16]{AbrSte-1964-book} and \cite{Bel-1961-book} for instance). Moreover, the Gaussian integral in \eqref{eq:Gamma-1} can be computed as follows
\[
\begin{split}
\int_{\R^3}\diff p \,p^2e^{-s(p^2-2p\cdot k(1-u))}&=e^{sk^2(1-u)^2}\int_{\R^3}\diff p\, p^2e^{-s(p-k(1-u))^2} \\
&\stackrel{q\coloneqq p-k(1-u)}{=}e^{sk^2(1-u)^2}\int_{\R^3}\left(q^2+2q\cdot k\left(1-u\right)+\left(1-u\right)^2k^2\right)e^{-sq^2}\,\diff q \\
&=e^{sk^2\left(1-u\right)^2}\left[\left(1-u\right)^2k^2\int_{\R^3}e^{-sq^2}\diff q+2\left(1-u\right)\int_{\R^3}k\cdot qe^{-sq^2}\diff q+\int_{\R^3}q^2e^{-sq^2}\diff q\right].
\end{split}
\]
Now, the second integral in the last line vanishes due to simmetry reasons. Concerning the first and the third integrals, one can easily show that
\begin{align*}
    \int_{\R^3}e^{-sq^2}\diff q&=\left(\frac{\pi}{s}\right)^{3/2},\\
    \int_{\R^3}q^2e^{-sq^2}\diff q&=\frac{2\pi}{s^2},
\end{align*}
yielding
\begin{multline*}
    \Gamma_1\left(k,\beta\right)=\frac{3}{2\beta\pi^3}\int_0^1\diff u\,u\int_0^\infty\diff s\sum_{j=0}^2c_je^{-s\left(m_j^2+u\left(1-u\right)k^2\right)}\\
    \times\left(-\frac{\diff}{\diff s}\theta_2\left(0,\frac{4\pi is}{\beta^2}\right)\right)\left(\left(1-u\right)^2k^2\pi^{3/2}s^{1/2}+2\pi\right).
\end{multline*}
Integrating by parts with respect to the $s-$variable, one gets
\begin{align*}
    \Gamma_1\left(k,\beta\right)&=\frac{3k^2}{2\beta\pi^{3/2}}\int_0^1\diff u\,u\left(1-u\right)^2\int_0^\infty\diff s\sum_{j=0}^2c_j\theta_2\left(0,\frac{4\pi is}{\beta^2}\right)e^{-s\left(m_j^2+u\left(1-u\right)k^2\right)}\times \\ 
    &\times \left(\frac{1}{2s^{1/2}}-s^{1/2}\left(m_j^2+u\left(1-u\right)k^2\right)\right)\\
    &-\frac{3}{\beta\pi^2}\int_0^1\diff u\,u\int_0^\infty\diff s\sum_{j=0}^2c_j\theta\left(0,\frac{4\pi is}{\beta^2}\right)e^{-s\left(m_j^2+u\left(1-u\right)k^2\right)}\left(m_j^2+u\left(1-u\right)k^2\right)\\
    &-\frac{3}{2\beta\pi^3}\int_0^1\diff u\,u\sum_{j=0}^2c_j\left[e^{-s\left(m_j^2+u\left(1-u\right)k^2\right)}\theta_2\left(0,\frac{4\pi is}{\beta^2}\right)\left(\left(1-u\right)^2k^2\pi^{3/2}s^{1/2}+2\pi\right)\right]_0^\infty.
\end{align*}
The square brackets in the last line vanish,
\begin{multline*}
    \left[e^{-s\left(m_j^2+u\left(1-u\right)k^2\right)}\theta_2\left(0,\frac{4\pi is}{\beta^2}\right)\left(\left(1-u\right)^2k^2\pi^{3/2}s^{1/2}+2\pi\right)\right]_{s=0}^{s=\infty}=\left[e^{-s\left(m_j^2+u\left(1-u\right)k^2\right)}\theta_2\left(0,\frac{4\pi is}{\beta^2}\right)2\pi\right]_{s=0}^{s=\infty}\\
    =2\pi\left[\underbrace{\underbrace{\sum_{j=0}^2c_j}_{=0, \mbox{ by \eqref{eq:PV-conditions}}}}
    \theta_2\left(0,\frac{4\pi is}{\beta^2}\right)+\sum_{j=0}^2c_j\theta_2\left(0,\frac{4\pi is}{\beta^2}\right)\underbrace{\left(e^{-s\left(m_j^2+u\left(1-u\right)k^2\right)}-1\right)}_{\bigO{s}\text{ for }s\to 0+}\right]_{s=0}^{s=\infty}=0,
\end{multline*}
which implies
\begin{align*}
&\Gamma_1\left(k,\beta\right)=\Gamma_{1,1}\left(k,\beta\right)+\Gamma_{1,2}\left(k,\beta\right)\\
    &\coloneqq\frac{3k^2}{2\beta\pi^{3/2}}\int_0^1\diff u\,u\left(1-u\right)^2\int_0^\infty\diff s\sum_{j=0}^2c_j\theta_2\left(0,\frac{4\pi is}{\beta^2}\right)e^{-s\left(m_j^2+u\left(1-u\right)k^2\right)}\left(\frac{1}{2s^{1/2}}-s^{1/2}\left(m_j^2+u\left(1-u\right)k^2\right)\right)\\
    &-\frac{3}{\beta\pi^2}\int_0^1\diff u\,u\int_0^\infty \diff s\sum_{j=0}^2c_j\theta_2\left(0,\frac{4\pi is}{\beta^2}\right)e^{-s\left(m_j^2+u\left(1-u\right)k^2\right)}\left(m_j^2+u\left(1-u\right)k^2\right).
\end{align*}\par
Let us first focus on $\Gamma_{1,1}$. This term can be further rewritten extracting the $n=0$ term in the series defining the function $\theta_2$ in \eqref{eq:jacobi-theta-2},
\begin{equation}\label{eq:theta-function-poisson}
    \theta_2\left(0,\frac{4\pi is}{\beta^2}\right)=\sum_{n\in\Z}e^{-s\frac{4\pi^2}{\beta^2}\left(n-\frac{1}{2}\right)^2}=\left(\frac{1}{\pi s}\right)^{1/2}\frac{\beta}{2}\left[1+\sum_{n\neq 0}e^{-\frac{\beta^2}{4s}n^2}e^{-i\pi n}\right],    
\end{equation}
where in the second equality we have used the Poisson summation formula \cite[Formula~6.5]{Bel-1961-book}. Accordingly, we get
\begin{equation}\label{eq:Gamma-11-decomposition}
\Gamma_{1,1}\left(k,\beta\right)=\Gamma^0_{1,1}\left(k\right)+\Gamma^T_{1,1}\left(k,\beta\right)\,,
\end{equation}
where
\begin{equation*}
    \Gamma_{1,1}^0\left(k\right)=\frac{3k^2}{4\pi^2}\int_0^1\diff u\,u\left(1-u\right)^2\int_0^\infty\diff s\sum_{j=0}^2c_je^{-s\left(m_j^2+u\left(1-u\right)k^2\right)}\left(\frac{1}{2s}-\left(m_j^2+u\left(1-u\right)k^2\right)\right)
\end{equation*}
and
\begin{equation*}
\begin{split}
    \Gamma_{1,1}^T\left(k,\beta\right)=\frac{3k^2}{4\pi^2}\sum_{n\neq 0}\left(-1\right)^n\int_0^1\diff u\,u\left(1-u\right)^2&\int_0^\infty\diff s\sum_{j=0}^2c_je^{-s\left(m_j^2+u\left(1-u\right)k^2\right)-\frac{\beta^2n^2}{4}\frac{1}{s}}\times \\
    &\times\left(\frac{1}{2s}-\left(m_j^2+u\left(1-u\right)k^2\right)\right).
    \end{split}
\end{equation*}\par
Let us first start from analysing $\Gamma_{1,1}^0$. Notice that, integrating by parts in the $s-$variable, one has
\begin{multline*}
    \int^\infty_0\,\diff s\, \sum^2_{j=0}c_j e^{-s(m^2_j+u(1-u)k^2)}\,\frac{1}{2s}\\
    =\int^\infty_0\,\diff s\, \sum^2_{j=0}c_j e^{-s(m^2_j+u(1-u)k^2)}(m^2_j+u(1-u)k^2)\,\log s +\left[\sum^2_{j=0}c_j e^{-s(m^2_j+u(1-u)k^2)} \log s\right]^{s=+\infty}_{s\to 0^+}\,,
\end{multline*}
and the boundary terms above vanish, since 
\[
\begin{split}
\left[\sum^2_{j=0}c_j e^{-s(m^2_j+u(1-u)k^2)} \log s\right]^{s=+\infty}_{s\to 0^+}&=-\lim_{s\to 0^+}\sum^2_{j=0}c_j e^{-s(m^2_j+u(1-u)k^2)} \log s \\
&=\lim_{s\to0^+}\underbrace{\sum^2_{j=0}c_j  \log s}_{=0, \mbox{ by \eqref{eq:PV-conditions}}}+ \lim_{s\to0^+} \sum^2_{j=0}c_j \left( e^{-s(m^2_j+u(1-u)k^2)}-1\right) \log s \\
&=0\,.
\end{split}
\]
Then, making the change of variable $\sigma=s(m^2_j+u(1-u)k^2)$ in the remaining integral, a direct computation gives
\[
\int^\infty_0\,\diff s\, \sum^2_{j=0}c_j e^{-s(m^2_j+u(1-u)k^2)}(m^2_j+u(1-u)k^2)\,\log s=-\sum^2_{j=0}c_j\log(m^2_j+u(1-u)k^2)\,.
\]
Moreover, notice that
\[
\int^\infty_0\,\diff s\, \sum^2_{j=0}c_j e^{-s(m^2_j+u(1-u)k^2)}\,(m^2_j+u(1-u)k^2)=\sum^2_{j=0}c_j \left[e^{-s(m^2_j+u(1-u)k^2)}\right]_{s=0}^{s=\infty}=0\,,
\]
using again \eqref{eq:PV-conditions}. Finally, we find
\begin{equation}\label{eq:Gamma-11-0}
\boxed{
    \Gamma^0_{1,1}\left(k\right)=-\frac{3k^2}{8\pi^2}\int^1_0\diff u\,u\left(1-u\right)^2 \sum^2_{j=0}c_j\log(m^2_j+u(1-u)k^2)\,.
    }
\end{equation}\par
We now deal with the term $\Gamma^T_{1,1}$ in \eqref{eq:Gamma-11-decomposition}, starting from the following identity (see \cite[Formulas~(3.324(1))~and~(3.471(9))]{GraRyz-1965-book}) involving modified Bessel functions on second kind $K_\nu$:
\begin{equation}\label{eq:Bessel-integral-formula}
    \int_{0}^\infty x^{\nu-1}\exp{\left(-\alpha\frac{1}{x}-\gamma x\right)}\diff x=2\left(\frac{\alpha}{\gamma}\right)^{\nu/2}K_\nu\left(2\sqrt{\alpha\nu}\right),\quad\mathrm{Re}\beta,\mathrm{Re}\gamma>0\,,
\end{equation}
which allows to rewrite
\[
\int_0^\infty\sum_{j=0}^2 c_j \frac{1}{2s}e^{-s\left(m_j^2+u\left(1-u\right)k^2\right)-\frac{\beta^2 n^2}{4}\frac{1}{s}}\diff s=\sum_{j=0}^2 c_j K_0\left(\beta n\sqrt{m_j^2+u\left(1-u\right)k^2}\right)
\]
and
\begin{multline*}
    \int_0^\infty\sum_{j=0}^2 c_j e^{-s\left(m_j^2+u\left(1-u\right)k^2\right)-\frac{\beta^2 n^2}{4}\frac{1}{s}}\left(m_j^2+u\left(1-u\right)k^2\right)\diff s\\
    =\sum_{j=0}^2c_j \beta n\sqrt{m_j^2+u\left(1-u\right)k^2}K_1\left(\beta n\sqrt{m_j^2+u\left(1-u\right)k^2}\right)\,.
\end{multline*}
Taking into account the parity of Bessel functions, we get
\[
\Gamma^T_{1,1}\left(k,\beta\right)=\frac{3k^2}{4\pi^2}\sum_{n=1}^\infty(-1)^n\int^1_0\diff u\, u\left(1-u\right)^2\sum^2_{j=0}c_j\left[K_0\left(nX_j\right)-n X_j K_1\left(n X_j\right) \right]\,,
\]
where we have set
\begin{equation}\label{eq:simple-Xj}
    X_j=X_j(\beta,u,k):=\beta\sqrt{m^2_j+u(1-u)k^2}\,.
\end{equation}
Finally, the Sommerfeld integral representation \cite[pp.~328]{NikOuv-1983-book} for Bessel functions of second kind,
\begin{equation}\label{eq:Sommerfeld-representation}
    K_\nu\left(x\right)=\int_0^\infty e^{-x\cosh t}\cosh\left(\nu t\right)\diff t,\;x>0\,,
\end{equation}
together with the identities
\[
\sum_{n=1}^\infty\left(-e^{-X_j\cosh t}\right)^n= -\frac{e^{-X_j\cosh t}}{1+e^{-X_j\cosh t}}
\]
and
\begin{align*}
    \sum_{n=1}^\infty n X_j\cosh t\left(-e^{-X_j\cosh t}\right)^n&=X_j\cosh t\,e^{-X_j\cosh t}\frac{\diff}{\diff y}\frac{y}{1+y}\bigg|_{y=e^{-X_j\cosh t}}\\
    &=X_j\cosh t\frac{e^{-X_j\cosh t}}{\left(1+e^{-X_j\cosh t}\right)^2}\,,
\end{align*}
yields
\begin{equation}\label{eq:Gamma-11-T}
    \boxed{
    \Gamma^T_{1,1}\left(k,\beta\right)=-\frac{3k^2}{4\pi^2}\int^1_0\diff u\, u\left(1-u\right)^2 \int^\infty_0\diff t \sum^2_{j=0}c_j \frac{e^{- X_j\cosh t}\left( 1+ X_j \cosh t+e^{-X_j \cosh t}\right)}{\left(1+e^{X_j \cosh t}\right)^2}\,.
    }
\end{equation}\par
Let us now focus on $\Gamma_{1,2}$. As for $\Gamma_{1,1}$, this term can be rewritten by means of the representation formula \eqref{eq:theta-function-poisson} for $\theta_2$, yielding
\begin{equation}\label{eq:Gamma-12-decomposition}
    \Gamma_{1,2}\left(k,\beta\right)=\Gamma_{1,2}^0\left(k\right)+\Gamma_{1,2}^T\left(k,\beta\right),
\end{equation}
where
\begin{equation*}
    \Gamma_{1,2}^0\left(k\right)=-\frac{3}{2\pi^{5/2}}\int_0^1\diff u\,u\int_0^\infty\diff s\sum_{j=0}^2c_je^{-s\left(m_j^2+u\left(1-u\right)k^2\right)}\frac{1}{s^{1/2}}\left(m_j^2+u\left(1-u\right)k^2\right)
\end{equation*}
and
\begin{equation*}
    \Gamma_{1,2}^T\left(k,\beta\right)=-\frac{3}{2\pi^{5/2}}\sum_{n\neq 0}\left(-1\right)^n\int_0^1\diff u\,u\int_0^\infty\diff s\sum_{j=0}^2c_je^{-s\left(m_j^2+u\left(1-u\right)k^2\right)-\frac{\beta^2n^2}{4}\frac{1}{s}}\frac{1}{s^{1/2}}\left(m_j^2+u\left(1-u\right)k^2\right).
\end{equation*}\par
Let us first start from analysing $\Gamma_{1,2}^0$. Notice that, integrating by parts in the $s-$variable, one has
\begin{multline*}
    \int_0^\infty\diff s\sum_{j=0}^2c_je^{-s\left(m_j^2+u\left(1-u\right)k^2\right)}\frac{1}{s^{1/2}}\left(m_j^2+u\left(1-u\right)k^2\right)
    \\
    =2\int_0^\infty\diff s\sum_{j=0}^2c_je^{-s\left(m_j^2+u\left(1-u\right)k^2\right)}s^{1/2}\left(m_j^2+u\left(1-u\right)k^2\right)^2\\
    =\sqrt{\pi}\sum_{j=0}^2c_j\left(m_j^2+u\left(1-u\right)k^2\right)^{1/2},
\end{multline*}
where in the last equality we have computed the Gaussian integral, using the change of variable $\sigma=s\left(m_j^2+u\left(1-u\right)k^2\right)$. This leads to
\begin{equation}\label{eq:Gamma-12-0}
    \boxed{\Gamma_{1,2}^0\left(k\right)=-\frac{3}{2\pi^2}\int_0^1\diff u\,u\sum_{j=0}^2c_j\left(m_j^2+u\left(1-u\right)k^2\right)^{1/2}.
    }
\end{equation}\par
We now deal with the term $\Gamma_{1,2}^T$ in \eqref{eq:Gamma-12-decomposition}. Using the integral identity in \eqref{eq:Bessel-integral-formula} involving modified Bessel functions of second kind, with $\nu=1/2$, $\gamma=\left(m_j^2+u\left(1-u\right)k^2\right)$ and $\alpha=\beta^2n^2/4$, one has
\begin{multline*}
    \int_0^\infty\diff s\sum_{j=0}^2c_je^{-s\left(m_j^2+u\left(1-u\right)k^2\right)-\frac{\beta^2n^2}{4}\frac{1}{s}}s^{-1/2}\left(m_j^2+u\left(1-u\right)k^2\right)\\
    =\sum_{j=0}^2c_j\sqrt{2\beta n}\left(m_j^2+u\left(1-u\right)k^2\right)^{3/2}K_{1/2}\left(\beta n\sqrt{m_j^2+u\left(1-u\right)k^2}\right).
\end{multline*}
Taking into account the parity of Bessel functions and using the Sommerfeld integral representation formula \eqref{eq:Sommerfeld-representation}, we finally get
\begin{equation}\label{eq:Gamma-12-T}
\boxed{
    \Gamma_{1,2}^T\left(k,\beta\right)=-\frac{3}{\pi^{5/2}}\int_0^1\diff u\,u\int_0^\infty\diff t\sum_{j=0}^2c_j\sum_{n=1}^\infty\left(-1\right)^n\left[\sqrt{2\beta n}\left(m_j^2+u\left(1-u\right)k^2\right)^{3/2}\cosh\left(\frac{t}{2}\right)e^{-nX_j\cosh t}\right]\,.
    }
\end{equation}
where $X_j$ was previously defined in \eqref{eq:simple-Xj}.

\subsubsection{Analysis of $\Gamma_2(k,\beta)$}
Expressing as before the sum in square brackets in terms of the Jacobi theta function $\theta_2$ defined in \eqref{eq:jacobi-theta-2} and computing the Gaussian integral in \eqref{eq:Gamma-2} yields
\begin{equation*}
    \Gamma_2\left(k,\beta\right)=\frac{3}{2\beta\pi^3}\int_0^1\diff u\,u\int_0^\infty\diff s\,s^2\sum_{j=0}^2c_jm_j^2e^{-s\left(m_j^2+u\left(1-u\right)k^2\right)}\left(-\frac{\diff}{\diff s}\theta_2\left(0,\frac{4\pi is}{\beta^2}\right)\right)\left(\frac{\pi}{s}\right)^{3/2}.
\end{equation*}
Integrating by parts in the $s-$variable and then using again the formula in \eqref{eq:theta-function-poisson},
one gets
\begin{align*}
    \Gamma_2\left(k,\beta\right)&=\Gamma_2^0\left(k\right)+\Gamma_2^T\left(k,\beta\right)\\
    &\coloneqq \frac{3}{4\pi^2}\int_0^1\diff u\, u\int_0^\infty\diff s\sum_{j=0}^2c_jm_j^2e^{-s\left(m_j^2+u\left(1-u\right)k^2\right)}\left(\frac{1}{2s}-\left(m_j^2+u\left(1-u\right)k^2\right)\right)\\
    &+\frac{3}{4\pi^2}\sum_{n\neq 0}\left(-1\right)^n\int_0^1 \diff u\,u\int_o^\infty\diff s\sum_{j=0}^2c_jm_j^2e^{-s\left(m_j^2+u\left(1-u\right)k^2\right)-\frac{\beta^2 n^2}{4}\frac{1}{s}}\left(\frac{1}{2s}-\left(m_j^2+u\left(1-u\right)k^2\right)\right).
\end{align*}\par
Let us now focus on $\Gamma_2^0$. Since
\begin{equation*}
    \int_0^\infty \diff s\sum_{j=0}^2c_jm_j^2e^{-s\left(m_j^2+u\left(1-u\right)k^2\right)}s^{-1}=-\sum_{j=0}^2c_jm_j^2\log\left(m_j^2+u\left(1-u\right)k^2\right)
\end{equation*}
and
\begin{equation*}
    -\int_0^\infty\diff s\sum_{j=0}^2c_jm_j^2e^{-s\left(m_j^2+u\left(1-u\right)k^2\right)}\left(m_j^2+u\left(1-u\right)k^2\right)=\sum_{j=0}^2c_jm_j^2\left[e^{-s\left(m_j^2+u\left(1-y\right)k^2\right)}\right]_{s=0}^{s=\infty}=0,
\end{equation*}
where in the latter we have used the Pauli-Villars conditions \eqref{eq:PV-conditions}, one has
\begin{equation}\label{eq:Gamma-2-0}
    \boxed{
    \Gamma_2^0\left(k\right)=-\frac{3}{8\pi^2}\int_0^1\diff u\,u\sum_{j=0}^2c_jm_j^2\log\left(m_j^2+u\left(1-u\right)k^2\right).
    }
\end{equation}\par
Let us now move to $\Gamma_2^T$. Here, we use that
\begin{equation*}
    \int_0^\infty\diff s\sum_{j=0}^2c_jm_j^2e^{-s\left(m_j^2+u\left(1-u\right)k^2\right)-\frac{\beta^2n^2}{4}\frac{1}{s}}\frac{1}{2s}=\sum_{j=0}^2c_jm_j^2K_0\left(nX_j\right)
\end{equation*}
and
\begin{equation*}
    \int_0^\infty\diff s\sum_{j=0}^2c_jm_j^2e^{-s\left(m_j^2+u\left(1-u\right)k^2\right)-\frac{\beta^2n^2}{4}\frac{1}{s}}\left(m_j^2+u\left(1-u\right)k^2\right)=\sum_{j=0}^2c_jm_j^2nX_jK_1\left(nX_j\right),
\end{equation*}
where $X_j$ is given by \eqref{eq:simple-Xj}, leading to, due to the parity of Bessel functions,
\begin{equation*}
    \Gamma_2^T\left(k,\beta\right)=\frac{3}{2\pi^2}\sum_{n=1}^\infty\left(-1\right)^n\int_0^1\diff u\,u\sum_{j=0}^2c_jm_j^2\left[K_0\left(nX_j\right)-nX_jK_1\left(nX_j\right)\right].
\end{equation*}
Now, by exploiting the Sommerfeld representation \eqref{eq:Sommerfeld-representation}, we can rewrite $\Gamma_2^T$ as
\begin{equation*}
    \Gamma_2^T\left(k,\beta\right)=\frac{3}{2\pi^2}\int_0^1\diff u\,u\int_0^\infty\diff t\sum_{j=0}^2c_jm_j^2\sum_{n=1}^\infty\left(-1\right)^n\left[e^{-nX_j\cosh t}-nX_j\cosh t\,e^{-nX_j\cosh t}\right],
\end{equation*}
which finally leads to
\begin{equation}\label{eq:Gamma-2-T}
    \boxed{
    \Gamma_2^T\left(k,\beta\right)=-\frac{3}{2\pi^2}\int_0^1\diff u\,u\int_0^\infty\diff t\sum_{j=0}^2c_jm_j^2\frac{e^{-X_j\cosh t}\left(1+X_j\cosh t+e^{-X_j\cosh t}\right)}{\left(1+e^{-X_j\cosh t}\right)^2}\,.
    }
\end{equation}

\subsubsection{Analysis of $\Gamma_3(k,\beta)$}
Here, the sum in square brackets can be expressed in terms of the Jacobi theta function $\theta_2$ in a slightly different way, as follows
\begin{equation*}
    \sum_{l\in\Z}\omega\left(l,\beta\right)^2e^{-s\omega\left(l,\beta\right)^2}=-\frac{\diff^2}{\diff s^2}\theta_2\left(0,\frac{4\pi is}{\beta^2}\right).
\end{equation*}
This, together with the value of the Gaussian integral in \eqref{eq:Gamma-3}, leads to
\begin{equation*}
    \Gamma_3\left(k,\beta\right)=-\frac{1}{2\beta\pi^3}\int_0^1\diff u\,u\int_0^\infty\diff s\,s^2\sum_{j=0}^2c_je^{-s\left(m_j^2+u\left(1-u\right)k^2\right)}\left(-\frac{\diff^2}{\diff s^2}\theta_2\left(0,\frac{4\pi i s}{\beta^2}\right)\right)\left(\frac{\pi}{s}\right)^{3/2}.
\end{equation*}
Integrating twice by parts and then using once again \eqref{eq:theta-function-poisson},
one gets
\begin{equation*}
\Gamma_3\left(k,\beta\right)=\Gamma_3^0\left(k\right)+\Gamma_3^T\left(k,\beta\right),
\end{equation*}
where
\begin{multline*}
    \Gamma_3^0\left(k\right)\coloneqq\frac{1}{4\pi^2}\int_0^1\diff u\,u\int_0^\infty\diff s\sum_{j=0}^2c_je^{-s\left(m_j^2+u\left(1-u\right)k^2\right)}\times \\
    \times\left(-\frac{1}{4s^2}-\frac{1}{s}\left(m_j^2+u\left(1-u\right)k^2\right)+\left(m_j^2+u\left(1-u\right)k^2\right)\right) 
\end{multline*}
and
\begin{multline*}
    \Gamma_3^T\left(k,\beta\right)\coloneqq\frac{1}{4\pi^2}\sum_{n\neq 0}\left(-1\right)^n\int_0^1\diff u\,u\int_0^\infty\diff s\sum_{j=0}^2c_je^{-s\left(m_j^2+u\left(1-u\right)k^2\right)-\frac{\beta^2n^2}{4}\frac{1}{s}}\times \\
    \times\left(-\frac{1}{4s^2}-\frac{1}{s}\left(m_j^2+u\left(1-u\right)k^2\right)+\left(m_j^2+u\left(1-u\right)k^2\right)\right).
\end{multline*}\par
Let us now focus on $\Gamma_3^0$. Similarly to the study of $\Gamma_2$, we have
\begin{equation}\label{eq:Gamma-3-0-aux1}
    \int_0^\infty\diff s\sum_{j=0}^2c_je^{-s\left(m_j^2+u\left(1-u\right)k^2\right)}\left(m_j^2+u\left(1-u\right)k^2\right)=-\sum_{j=0}^2c_j\left[e^{-s\left(m_j^2+u\left(1-u\right)k^2\right)}\right]_{s=0}^{s=\infty}=0,
\end{equation}
using the Pauli-Villars conditions \eqref{eq:PV-conditions} for $s=0$, and
\begin{multline}
    -\int_0^\infty\diff s\sum_{j=0}^2c_je^{-s\left(m_j^2+u\left(1-u\right)k^2\right)}s^{-1}\left(m_j^2+u\left(1-u\right)k^2\right)\\\label{eq:Gamma-3-0-aux2}
    =\sum_{j=0}^2c_j\log\left(m_j^2+u\left(1-u\right)k^2\right)\left(m_j^2+u\left(1-u\right)k^2\right).
\end{multline}
In addition, here we need to compute
\begin{align*}
    \int_0^\infty\diff s\sum_{j=0}^2c_je^{-s\left(m_j^2+u\left(1-u\right)k^2\right)}\frac{1}{s^2}&=-\int_0^\infty\diff s\sum_{j=0}^2c_je^{-s\left(m_j^2+u\left(1-u\right)k^2\right)}\frac{\diff}{\diff s}\frac{1}{s}\\
    &=-\lim_{\varepsilon\to 0+}\int_{\varepsilon}^\infty\diff s\left[\underbrace{\sum_{j=0}^2c_j}_{=0, \mbox{ by \eqref{eq:PV-conditions}}}\frac{\diff}{\diff s}\frac{1}{s}+\sum_{j=0}^2c_j\frac{\diff}{\diff s}\frac{1}{s}\left(e^{-s\left(m_j^2+u\left(1-u\right)k^2\right)}-1\right)\right],
\end{align*}
which, integrating by parts, implies
\begin{multline*}
    \int_0^\infty\diff s\sum_{j=0}^2c_je^{-s\left(m_j^2+u\left(1-u\right)k^2\right)}\frac{1}{s^2}=-\lim_{\varepsilon\to 0+}\left[\sum_{j=0}^2c_j\underbrace{\frac{e^{-s\left(m_j^2+u\left(1-u\right)k^2\right)}-1}{s}}_{-\left(m_j^2+u\left(1-u\right)k^2\right)+\smallo{1}\text{ for }s\to 0+}\right]_{s=\varepsilon}^{s=\infty}\\
    -\int_0^\infty\diff s\sum_{j=0}^2c_j\frac{1}{s}e^{-s\left(m_j^2+u\left(1-u\right)k^2\right)}\left(m_j^2+u\left(1-u\right)k^2\right).
\end{multline*}
Finally,
\begin{multline}
    \int_0^\infty\diff s\sum_{j=0}^2c_je^{-s\left(m_j^2+u\left(1-u\right)k^2\right)}\frac{1}{s^2}=\underbrace{\sum_{j=0}^2c_j\left(m_j^2+u\left(1-u\right)k^2\right)}_{=0, \mbox{ by \eqref{eq:PV-conditions}}}\\
    -\int_0^\infty\diff s\sum_{j=0}^2c_je^{-s\left(m_j^2+u\left(1-u\right)k^2\right)}s^{-1}\left(m_j^2+u\left(1-u\right)k^2\right)\\\label{eq:Gamma-3-0-aux3}
    =\sum_{j=0}^2c_j\log\left(m_j^2+u\left(1-u\right)k^2\right)\left(m_j^2+u\left(1-u\right)k^2\right).
\end{multline}
Now, \eqref{eq:Gamma-3-0-aux1}, \eqref{eq:Gamma-3-0-aux2} and \eqref{eq:Gamma-3-0-aux3} allow to rewrite
\begin{equation}\label{eq:Gamma-3-0}
    \boxed{
    \Gamma_3^0\left(k\right)=\frac{5}{16\pi^2}\int_0^1\diff u\,u\sum_{j=0}^2c_j\log\left(m_j^2+u\left(1-u\right)k^2\right)\left(m_j^2+u\left(1-u\right)k^2\right).
    }
\end{equation}\par
Let us now move to $\Gamma_3^T$. Similarly to the study of $\Gamma_2^T$, we have
\begin{equation}\label{eq:Gamma-3-T-aux1}
    \int_0^\infty\diff s\sum_{j=0}^2c_je^{-s\left(m_j^2+u\left(1-u\right)k^2\right)-\frac{\beta^2n^2}{4}\frac{1}{s}}\frac{1}{s}\left(m_j^2+u\left(1-u\right)k^2\right)=\sum_{j=0}^2c_j\left(m_j^2+u\left(1-u\right)k^2\right)K_0\left(nX_j\right)
\end{equation}
and
\begin{equation}\label{eq:Gamma-3-T-aux2}
    \int_0^\infty\diff s\sum_{j=0}^2c_je^{-s\left(m_j^2+u\left(1-u\right)k^2\right)-\frac{\beta^2n^2}{4}\frac{1}{s}}\left(m_j^2+u\left(1-u\right)k^2\right)=\sum_{j=0}^2c_jnX_jK_1\left(nX_j\right),
\end{equation}
where $X_j$ is defined in \eqref{eq:simple-Xj}. In addition, here we need to compute
\begin{equation*}
    \frac{1}{4}\int_0^\infty\diff s\sum_{j=0}^2c_je^{-s\left(m_j^2+u\left(1-u\right)k^2\right)-\frac{\beta^2n^2}{4}\frac{1}{s}}\frac{1}{s^2}.
\end{equation*}
To do so, we exploit the following identity \cite[Formula~8.432(7)]{GraRyz-1965-book}
\begin{equation*}
    K_\nu\left(xz\right)=\frac{z^\nu}{2}\int_0^\infty e^{-\frac{x}{2}\left(t+\frac{z^2}{t}\right)}t^{-\nu-1}\diff t,
\end{equation*}
taking
\begin{equation*}
    t=s,\quad \nu=1,\quad \frac{x}{2}=m_j^2+u\left(1-u\right)k^2\quad\text{and}\quad\frac{x}{2}z^2=\frac{\beta^2n^2}{4}\iff z=\frac{\beta n}{4\left(m_j^2+u\left(1-u\right)k^2\right)^{1/2}}.
\end{equation*}
Then,
\begin{align}
    \notag\frac{1}{4}\int_0^\infty\diff s\sum_{j=0}^2c_je^{-s\left(m_j^2+u\left(1-u\right)k^2\right)-\frac{\beta^2n^2}{4}\frac{1}{s}}\frac{1}{s^2}&=\sum_{j=0}^2c_j\frac{2\left(m_j^2+u\left(1-u\right)k^2\right)^{1/2}}{\beta n}K_1\left(\frac{1}{2}\beta n\left(m_j^2+u\left(1-u\right)k^2\right)^{1/2}\right)\\\label{eq:Gamma-3-T-aux3}
    &=\sum_{j=0}^2c_j\frac{2X_j}{\beta^2 n}K_1\left(\frac{n}{2}X_j\right).
\end{align}\par
Now, \eqref{eq:Gamma-3-T-aux1}, \eqref{eq:Gamma-3-T-aux2} and \eqref{eq:Gamma-3-T-aux3} allow to rewrite, by using also the parity of Bessel functions, $\Gamma_3^T$ as
\begin{empheq}[box=\fbox]{align}   \label{eq:Gamma-3-T}
\begin{split}
    \Gamma_3^T\left(k,\beta\right)=\frac{1}{2\pi^2}\int_0^1\diff u\,u\int_0^\infty \diff t&\sum_{j=0}^2c_j\sum_{n=1}^\infty\left(-1\right)^n\bigg[-\frac{2X_j}{\beta^2 n}e^{-\frac{n}{2}X_j\cosh t}\cosh t \\ &-2\left(m_j^2+u\left(1-u\right)k^2\right)e^{-nX_j\cosh t}
    +nX_je^{-nX_j\cosh t}\cosh t\bigg].
    \end{split}
\end{empheq}\par
\medskip
 We are now ready to state the following lemma which summarises the previous calculations and complete the analysis of the second-order term $\mathcal{F}_2\left(\bm{F},\beta\right)$.
\begin{lem}\label{lem:F2_proof}
    For $\bm A\in L^2(\R^3,\R^4) \cap \hd$, we have
    \begin{multline}\label{eq:explicit-F2-lemma}
        \mathcal{F}_2\left(\bm{F},\beta\right)=\frac{1}{8\pi}\int_{\R^3}\left(M^0\left(k\right)+M^T\left(k,\beta\right)\right)\left(\abs{\widehat{B}\left(k\right)}^2-\abs{\widehat{E}\left(k\right)}^2\right)\diff k\\
        +\frac{1}{\beta}\int_0^\beta\int_{\R^3}\frac{\Gamma(k,\beta)}{\vert k\vert^2}\abs{\widehat{E}\left(k\right)}^2\diff k\diff b,
    \end{multline}
    where $M^0$ and $M^T$ are respectively defined in \eqref{eq:Mzero} and \eqref{eq:MT}, and 
    $\Gamma \in  C^0(\R^3\times(0,+\infty))$ is an explicit function of $(k,\beta)$, such that, for any $\beta\in\intoo{0}{+\infty}$,
    \begin{equation*}
        \frac{\Gamma(\cdot,\beta)}{\vert \cdot\vert^2}\in L^1_{\mathrm{loc}}(\R^3)\,,\quad \mbox{and}\quad \frac{\Gamma(\cdot,\beta)}{\vert \cdot\vert^2}\in L^\infty(\{\vert k\vert\geq \delta\})\,,\quad\forall \delta>0\,.
    \end{equation*}
\end{lem}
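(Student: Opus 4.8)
The plan is to assemble the explicit expression for $\Gamma$ from the computations of the three preceding subsubsections, and then to deduce its continuity and integrability directly from the resulting closed forms. First I would set
\begin{equation}\label{eq:Gamma_sum}
\Gamma=\Gamma_1+\Gamma_2+\Gamma_3=\left(\Gamma^0_{1,1}+\Gamma^T_{1,1}\right)+\left(\Gamma^0_{1,2}+\Gamma^T_{1,2}\right)+\left(\Gamma^0_2+\Gamma^T_2\right)+\left(\Gamma^0_3+\Gamma^T_3\right),
\end{equation}
collecting the boxed identities \eqref{eq:Gamma-11-0}, \eqref{eq:Gamma-11-T}, \eqref{eq:Gamma-12-0}, \eqref{eq:Gamma-12-T}, \eqref{eq:Gamma-2-0}, \eqref{eq:Gamma-2-T}, \eqref{eq:Gamma-3-0} and \eqref{eq:Gamma-3-T}; this is the explicit function of $(k,\beta)$ appearing in the statement. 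To reach \eqref{eq:explicit-F2-lemma} I would combine \eqref{eq:sum-Txy-M0-MT}, which already yields the $M^0+M^T$ line, with the remaining contribution \eqref{eq:integral-Gamma-V} of $\mathcal T^2_{2,2}(V)$, and use the gauge relation $E=-\nabla V$, i.e. $\widehat E(k)=-ik\widehat V(k)$, so that $\vert\widehat V(k)\vert^2=\vert\widehat E(k)\vert^2/\vert k\vert^2$. This is exactly what converts the multiplier $\Gamma$ into the claimed $\Gamma/\vert k\vert^2$ weighting of $\vert\widehat E\vert^2$.

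For continuity, each zero-temperature piece is an integral over $u\in\intoo{0}{1}$ of logarithms and square roots of $m^2_j+u(1-u)\vert k\vert^2$, which are smooth and strictly positive, so joint continuity in $(k,\beta)$ follows by dominated convergence (the dependence on $\beta$ being trivial there). For the thermal pieces I would use that the factors $e^{-nX_j\cosh t}$, with $X_j=\beta\sqrt{m^2_j+u(1-u)\vert k\vert^2}\geq\beta m_j>0$, give super-exponential decay in $t$ and geometric decay in $n$, locally uniformly in $(k,\beta)$; hence the $t$-integrals and the alternating series in \eqref{eq:Gamma-11-T}, \eqref{eq:Gamma-12-T}, \eqref{eq:Gamma-2-T} and \eqref{eq:Gamma-3-T} converge locally uniformly, giving $\Gamma\in C^0\left(\R^3\times\intoo{0}{+\infty}\right)$.

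The two regularity claims then separate into behaviour at the origin and at infinity. For $\Gamma(\cdot,\beta)/\vert\cdot\vert^2\in L^1_{\mathrm{loc}}(\R^3)$ it suffices that $\Gamma(\cdot,\beta)$ be continuous, hence bounded on compacta, while $\vert k\vert^{-2}$ is locally integrable in dimension three; the product is therefore locally integrable even though $\Gamma(0,\beta)$ need not vanish, this nonvanishing at $k=0$ being already visible in \eqref{eq:Gamma-12-0} and being the source of the low-momentum singularity discussed in Remark~\ref{rmk:main}. For $\Gamma(\cdot,\beta)/\vert\cdot\vert^2\in L^\infty(\{\vert k\vert\geq\delta\})$, continuity disposes of every annulus $\{\delta\leq\vert k\vert\leq R\}$, so that the only real issue is the regime $\vert k\vert\to\infty$. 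There the thermal part decays: for $u$ bounded away from $0$ and $1$ one has $X_j\sim\beta\sqrt{u(1-u)}\,\vert k\vert\to\infty$, forcing exponential smallness of $\Gamma^T$, while the neighbourhoods of $u=0,1$ carry negligible mass.

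The main obstacle is the large-$\vert k\vert$ analysis of the zero-temperature part $\Gamma^0$, since a crude estimate would permit growth as fast as $\vert k\vert^2\log\vert k\vert$, which would destroy the $L^\infty$ bound. The plan is to exploit the Pauli–Villars conditions \eqref{eq:PV-conditions} systematically: rewriting the sums $\sum_j c_j\log(m^2_j+u(1-u)\vert k\vert^2)$ and $\sum_j c_j(m^2_j+u(1-u)\vert k\vert^2)\log(\cdots)$ by subtracting the zero-sum combinations, so that $\sum_j c_j$ and $\sum_j c_j m^2_j$ appear precisely as the coefficients of the divergent $\log\vert k\vert$ and $\vert k\vert^2$ pieces, one checks that these leading contributions cancel identically. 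What remains is integrable against the $u$-weights after a careful treatment of the endpoints $u=0,1$, where the large-$\vert k\vert$ expansion fails to be uniform, and gives $\Gamma^0(k)=\bigO{\vert k\vert^2}$ as $\vert k\vert\to\infty$, so that $\Gamma^0(\cdot,\beta)/\vert\cdot\vert^2$ stays bounded. Combining the origin and infinity estimates yields both properties in \eqref{eq:Gamma-properties-thm} and completes the proof.
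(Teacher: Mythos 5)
Your proposal is correct and takes essentially the same route as the paper's own proof: the same assembly of $\Gamma$ from the boxed identities, the same combination of \eqref{eq:sum-Txy-M0-MT} with \eqref{eq:integral-Gamma-V} using $\vert\widehat V(k)\vert^2=\vert\widehat E(k)\vert^2/\vert k\vert^2$, continuity by dominated convergence with the exponential bounds coming from $X_j\geq \beta m_0$, local integrability of $\Gamma/\vert\cdot\vert^2$ from local boundedness times $\vert k\vert^{-2}\in L^1_{\mathrm{loc}}(\R^3)$, and the Pauli--Villars cancellation of the would-be $\log\vert k\vert$ and $\vert k\vert^2$ divergences for the $L^\infty$ bound at infinity. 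No gaps to report.
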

\begin{proof}
    From equations \eqref{eq:F2-decomposition-T21-T22-T23}-\eqref{eq:T23}, we have
    \begin{align*}
        \mathcal{F}_2\left(\bm{F},\beta\right)&=\tr\left(\tr_{\C^4}T_2\left(\bm{A},\beta\right)\right)\\
        &=\frac{1}{\beta}\int_0^\beta\left(\mathcal{T}_{2,1}(\bm A,b)+\mathcal T_{2,2}(A,b)+\mathcal T^1_{2,2}(V,b)+\mathcal T_{2,3}(\bm A,b)\right)\diff b+\frac{1}{\beta}\int_0^\beta \mathcal T^2_{2,2}(V,\beta)\,\diff b.
    \end{align*}
    The first integral in the last line makes appear the first term in \eqref{eq:explicit-F2-lemma} thanks to \eqref{eq:sum-Txy-M0-MT},
    \begin{multline*}
        \frac{1}{\beta}\int_0^\beta\left(\mathcal{T}_{2,1}(\bm A,b)+\mathcal T_{2,2}(A,b)+\mathcal T^1_{2,2}(V,b)+\mathcal T_{2,3}(\bm A,b)\right)\diff b\\
        =\frac{1}{8\pi}\int_{\R^3}\left(M^0(k)+M^T(k,\beta)\right)\left(\vert \widehat B(k)\vert^2-\vert \widehat E(k)\vert^2\right)\, \diff k\,.
    \end{multline*}
    On the other hand, \eqref{eq:integral-Gamma-V} allows to rewrite the second integral as
    \begin{equation}\label{eq:Gamma_field}
    \begin{split}
        \frac{1}{\beta}\int_0^\beta \mathcal T^2_{2,2}(V,b)\,\diff b&=\frac{1}{\beta}\int_0^\beta \int_{\R^3}\Gamma(k,b)\vert \widehat V(k)\vert^2\,\diff k\diff b \\
        &=\frac{1}{\beta}\int_0^\beta \int_{\R^3}\frac{\Gamma(k,b)}{\vert k\vert^2}\vert \widehat E(k)\vert^2\,\diff k\diff b \,,\,,
        \end{split}
    \end{equation}
    where $E=-\nabla V$. Here
    \begin{equation}\label{eq:Gamma_sum}
        \begin{split}
            \Gamma\left(k,\beta\right)&=\Gamma_1\left(k,\beta\right)+\Gamma_2\left(k,\beta\right)+\Gamma_3\left(k,\beta\right)\\
        &=\Gamma_{1,1}^0\left(k\right)+\Gamma_{1,1}^T\left(k,\beta\right)+\Gamma_{1,2}^0\left(k\right)+\Gamma_{1,2}^T\left(k,\beta\right)\\
        &+\Gamma_2^0\left(k\right)+\Gamma_2^T\left(k,\beta\right)\\
        &+\Gamma_3^0\left(k\right)+\Gamma_3^T\left(k,\beta\right)\,,
        \end{split}
    \end{equation}
    with the above addends being defined respectively in \eqref{eq:Gamma-11-0}, \eqref{eq:Gamma-11-T}, \eqref{eq:Gamma-12-0}, \eqref{eq:Gamma-12-T}, \eqref{eq:Gamma-2-0}, \eqref{eq:Gamma-2-T}, \eqref{eq:Gamma-3-0} and \eqref{eq:Gamma-3-T}.
    The final part of the proof is devoted to the study of the properties of $\Gamma(\cdot,\beta)$ and of the quotient $\Gamma(\cdot,\beta)/\vert \cdot\vert^2$ referring to the previous analysis of the various terms appearing in \eqref{eq:Gamma_sum}. For the sake of brevity, we only sketch parts of the argument.

    First of all, it is not hard to see that the function $\Gamma\in C^0(\R^3\times(0,+\infty))$ is continuous in both variables, using the dominated convergence theorem for the $(k,\beta)-$dependent integrals appearing in \eqref{eq:Gamma_sum}. Notice that $\Gamma$ can be written as the sum of several terms, as in \eqref{eq:Gamma_sum}. However, we only deal with $\Gamma^T_{1,2}(k,\beta)$ in \eqref{eq:Gamma-12-T}, as the other terms can be treated adapting the same argument, or even in a simpler way. Up to a constant, we have
    \begin{equation}\label{eq:ex_dom_conv}
    \Gamma^T_{1,2}(k,\beta)=\int^1_0 \diff u\,u \int^\infty_0\diff t \sum^2_{j=0}c_j \sum^{\infty}_{n=1}(-1)^n \left[\sqrt{2\beta n}(m^2_j+u(1-u)k^2)^{3/2}\cosh(t/2)e^{-nX_j \cosh t} \right]\,,
    \end{equation}
    where we recall that $X_j=\beta\sqrt{m^2_j+u(1-u)k^2}$ and $m_2>m_1>m_0$. We want to prove continuity at any fixed $(\overline k,\overline\beta)\in\R^3\times (0,+\infty)$, and then we take a sequence $\R^3\times(0,+\infty)\supseteq(k_n,\beta_n)\to(\overline k,\overline\beta)$. We can estimate the integrand in \eqref{eq:ex_dom_conv} as follows, assuming $(k,\beta)$ varies in a compact set, so that $\vert k\vert\leq M$ and $ \beta_\mathrm{min}<\vert\beta\vert\leq \beta_\mathrm{max}$, for some constants $M>0$, $\beta_\mathrm{max}>\beta_\mathrm{min}>0$.
    \[
    \begin{split}
        &\left\vert\sum^2_{j=0}\sum^\infty_{n=1} u c_j (-1)^n \left[\sqrt{2\beta n}(m^2_j+u(1-u)k^2)^{3/2}\cosh(t/2)e^{-nX_j \cosh t} \right] \right\vert \\
        &\leq\underbrace{\left(\sum^\infty_{n=1}\sqrt{\beta_\mathrm{max} n}e^{-(n \beta_\mathrm{min} m_0)/2}\right)}_{<+\infty}(m^2_2+M^2)^{3/2}\cosh(t/2)e^{-( \beta_\mathrm{min} m_0 \cosh t)/2} \\
        &\in L^1((0,1)\times (0,+\infty); \diff u\diff t)\,.
    \end{split}
    \]
    uniformly in $(k,\beta)\in\{\vert k\vert\leq M\}\times\{\beta_\mathrm{min}<\vert \beta\vert<\beta_\mathrm{max}\}$. Here, we use that $u\in[0,1]$ and $X_j=\beta\sqrt{m^2_j+u(1-u)k^2}\geq \beta m_0$, for $j=0,1,2$.
    Then, by Lebesgue's dominated convergence theorem, we can pass to the limit in \eqref{eq:ex_dom_conv} to get
\[
\lim_{n\to+\infty}\Gamma^T_{1,2}(k_n,\beta_n)=\Gamma^T_{1,2}(\overline k,\overline\beta)\,.
\]

    Recall that we are assuming $E \in L^1(\R^3)\cap L^2(\R^3)$, which implies that $\widehat E\in C^0(\R^3)\cap L^2(\R^3)$. This shows that the last integral in \eqref{eq:Gamma_field} is locally convergent, as $\Gamma(\cdot,\beta)/\vert \cdot\vert^2\in L^1_{\mathrm{loc}}(\R^3)$.\par
    All that remains is to study the behaviour of this function at infinity. Once again, we provide a detailed analysis only for one term in \eqref{eq:Gamma_sum}, as the remaining terms can be treated similarly, or even by simpler arguments. Therefore, we focus on
    \[
    \begin{split}
    \frac{\Gamma^0_3(k)}{\vert k\vert^2}&=\frac{5}{16\pi^2}\int_0^1\diff u\,u\sum_{j=0}^2c_j\log\left(m_j^2+u\left(1-u\right)k^2\right)\left(\frac{m_j^2}{\vert k\vert^2}+u\left(1-u\right)\right) \\
    &+\frac{5}{16\pi^2}\int_0^1\diff u\,u\sum_{j=0}^2c_j\log\left(m_j^2+u\left(1-u\right)k^2\right)\frac{m_j^2}{\vert k\vert^2} \\
    &+\frac{5}{16\pi^2}\int_0^1\diff u\,u\sum_{j=0}^2c_j\log\left(m_j^2+u\left(1-u\right)k^2\right)\times u\left(1-u\right),
    \end{split}
    \]
    where the last integral has an apparent logarithmic divergence for high momenta, $\vert k\vert \to\infty$. However, it is actually cancelled thanks to the Pauli-Villars conditions \eqref{eq:PV-conditions}. To see this, let us rewrite the above sum as
    \[
    \begin{split}
\sum_{j=0}^2c_j\log\left(m_j^2+u\left(1-u\right)\vert k\vert^2\right)&=\sum_{j=0}^2c_j\log\left(\left(\frac{m_j^2}{\vert k\vert^2}+u\left(1-u\right)\right)\vert k\vert^2\right)\\
&=\sum_{j=0}^2c_j\log\left(\frac{m_j^2}{\vert k\vert^2}+u\left(1-u\right)\right)+\underbrace{\sum^2_{j=0}c_j}_{=0} \log(\vert k\vert^2)\\
&=\sum_{j=0}^2c_j\log\left(\frac{m_j^2}{\vert k\vert^2}+u\left(1-u\right)\right)\,.
    \end{split}
    \]
    This shows that 
    \[
    \frac{\Gamma(\cdot,\beta)}{\vert \cdot\vert^2}\in L^\infty(\{\vert k\vert\geq\delta\})\,,\qquad \forall\delta>0\,,
    \]
    and then the $k-$integral in \eqref{eq:Gamma_field} is also convergent at infinity, since $\widehat E\in L^2(\R^3)$.
\end{proof}

\subsection{End of the proof of \cref{thm:main}}
Item $i)$ of \cref{thm:main} has been proved in Section \ref{sec:proof_item_i)}, while the validity of formula \eqref{eq:FPV-decomposition} is obtained in \eqref{eq:PVs} and Lemma~\ref{lem:F2_proof}. The remainder estimates \eqref{eq:R-bound} is the content of Lemma \ref{lem:R_est}.
The validity of the identity \eqref{eq:F2} in item $iii)$ of \cref{thm:main} has been proved in \cref{sec:the-quadratic-term}. Therefore, we only need to check whether the integrals in \eqref{eq:F2} can be extended by density, passing from the assumption that $\bm{A}\in L^1(\R^3,\R^4)\cap\hd$ and $E=-\nabla V\in L^1(\R^3,\R^3)$, to the the case where the electromagnetic field verifies $(E,B)\in L^1(\R^3)\cap L^2(\R^3)$. By the results in \cite{Mor-2025-AHP}, we only need to deal with the term involving the multiplier $\Gamma(k,\beta)$, for which we have an estimate of the form
\[
\begin{split}
\frac{1}{\beta}&\int_0^\beta\int_{\R^3}\frac{\Gamma(k,b)}{\vert k\vert^2}\abs{\widehat{E}\left(k\right)}^2\diff k\diff b\\ 
&=\frac{1}{\beta}\int_0^\beta\int_{\{\vert k\vert\leq \delta\}}\frac{\Gamma(k,b)}{\vert k\vert^2}\abs{\widehat{E}\left(k\right)}^2\diff k\diff b +\frac{1}{\beta}\int_0^\beta\int_{\{\vert k\vert\geq \delta\}}\frac{\Gamma(k,b)}{\vert k\vert^2}\abs{\widehat{E}\left(k\right)}^2\diff k\diff b \\
&\leq\Vert \widehat E\Vert^2_{L^\infty}\frac{1}{\beta}\int_0^\beta\int_{\{\vert k\vert\leq \delta\}}\frac{\Gamma(k,b)}{\vert k\vert^2}\diff k\diff b+\Vert \widehat E\Vert^2_{L^2}\left\Vert \frac{\Gamma(\cdot,b)}{\vert\cdot\vert^2}\right\Vert_{L^\infty(\{\vert k\vert\leq \delta\})}\\
&\leq C_1(\beta) \Vert E\Vert_{L^1}+C_2(\beta)\Vert E\Vert_{L^2}\,, 
\end{split}
\]
by the Plancherel identity and the continuity of the Fourier transform from $L^1$ to $L^\infty$. Here $C_1(\beta), C_2(\beta)$ are two positive constants, depending on $\beta$. This concludes the proof of item $iv)$, thanks to the result in \cite{Mor-2025-AHP} which applies to the terms in \eqref{eq:F2} involving the multipliers $M^0(k)$ and $M^T(k,\beta)$, thus allowing the functional \eqref{eq:FPV} to be extended to the space $X$ in \eqref{eq:X}.


\addtocontents{toc}{\protect\setcounter{tocdepth}{0}} 

\section*{Acknowledgements}
W. B. has been supported by MUR grant \emph{Dipartimento di Eccellenza 2023-2027} of Dipartimento di Matematica and by Gruppo Nazionale per l'Analisi Matematica, la Probabilità e le loro Applicazioni GNAMPA – INdAM in Italy. W.B. acknowledges that this study was carried out within the project E53D23005450006 ``Nonlinear dispersive equations in presence of singularities'' -- funded by European Union -- Next Generation EU within the PRIN 2022 program (D.D. 104 - 02/02/2022 Ministero dell'Universit\`a e della Ricerca). U. M. gratefully acknowledges financial support from MUR–Italian Ministry of University and Research and Next Generation EU within PRIN 2022AKRC5P "Interacting Quantum Systems: Topological Phenomena and Effective Theories". U. M. has also been partially supported by Gruppo Nazionale per la Fisica Matematica GNFM – INdAM in Italy. U. M. acknowledges financial support from the European Union through the European Research Council’s Starting Grant FermiMath, grant agreement nr. 101040991. Views and opinions expressed are those of the authors and do not necessarily reflect those of the European Union or the European Research Council Executive Agency. Neither the European Union nor the granting authority can be held responsible for them.
\section*{Conflict of interest}
The author declares no conflict of interest.

\section*{Data availability statement}
Data sharing is not applicable to this article as it has no associated data.

\addtocontents{toc}{\protect\setcounter{tocdepth}{2}} 



\begin{thebibliography}{10}
\bibliographystyle{siam}

\bibitem{AbrSte-1964-book}
{\sc M.~Abramowitz and I.~A. Stegun}, {\em Handbook of mathematical functions
  with formulas, graphs and mathematical tables}, United States Department of
  Commerce, National Bureau of Standards, 1964.

\bibitem{AudMorLevTou-2025-JPA}
{\sc T.~Audinet, U.~Morellini, A.~Levitt, and J.~Toulouse}, {\em Vacuum
  polarization in a one-dimensional effective quantum-electrodynamics model},
  Journal of Physics A: Mathematical and Theoretical,  (2025).

\bibitem{BacLieSol-1994-JSP}
{\sc V.~Bach, E.~H. Lieb, and J.~P. Solovej}, {\em Generalized {H}artree-{F}ock
  theory and the {H}ubbard model}, Journal of Statistical Physics, 76 (1994),
  pp.~3--89.

\bibitem{BarHar-2001-TAJ}
{\sc M.~G. Baring and A.~K. Harding}, {\em Photon splitting and pair creation
  in highly magnetized pulsars}, The Astrophysical Journal, 547 (2001),
  pp.~929--948.

\bibitem{Bel-1961-book}
{\sc R.~Bellman}, {\em A brief introduction to theta functions}, Dover
  Publications, 1961.

\bibitem{Ber-1974-PRD}
{\sc C.~W. Bernard}, {\em Feynman rules for gauge theories at finite
  temperature}, Physical Review D, 9 (1974), pp.~3312--3320.

\bibitem{BjoDre-1965-book}
{\sc J.~Bjorken and S.~Drell}, {\em Relativistic quantum fields}, McGraw-Hill,
  New York, 1965.

\bibitem{BorMor-2025-JPA}
{\sc W.~Borrelli and U.~Morellini}, {\em Global well-posedness in a
  {H}artree–{F}ock model for graphene}, Journal of Physics A: Mathematical
  and Theoretical, 58 (2025).

\bibitem{BurFieHorSpe-1997-PRL}
{\sc D.~L. Burke, R.~C. Field, G.~Horton-Smith, J.~E. Spencer, D.~Walz, S.~C.
  Berridge, W.~M. Bugg, K.~Shmakov, A.~W. Weidemann, C.~Bula, K.~T. McDonald,
  E.~J. Prebys, C.~Bamber, S.~J. Boege, T.~Koffas, T.~Kotseroglou, A.~C.
  Melissinos, D.~D. Meyerhofer, D.~A. Reis, and W.~Ragg}, {\em Positron
  production in multiphoton light-by-light scattering}, Physical Review
  Letters, 79 (1997), pp.~1626--1629.

\bibitem{ChaIra-1989-JPB}
{\sc P.~Chaix and D.~Iracane}, {\em From quantum electrodynamics to mean-field
  theory. {I}. {T}he {B}ogoliubov-{D}irac-{F}ock formalism}, Journal of Physics
  B: Atomic, Molecular and Optical Physics, 22 (1989), pp.~3791--3814.

\bibitem{DenSve-2003-AA}
{\sc V.~I. Denisov and S.~I. Svertilov}, {\em Vacuum nonlinear electrodynamics
  curvature of photon trajectories in pulsars and magnetars}, Astronomy \&
  Astrophysics, 399 (2003), pp.~L39--L42.

\bibitem{Dir-1930-PRSL}
{\sc P.~A.~M. Dirac}, {\em A theory of electrons and protons}, Proceedings of
  the Royal Society of London. Series A, Containing Papers of a Mathematical
  and Physical Character, 126 (1930), pp.~360--365.

\bibitem{Dir-1934-MPCPS}
\leavevmode\vrule height 2pt depth -1.6pt width 23pt, {\em Discussion of the
  infinite distribution of electrons in the theory of the positron},
  Mathematical Proceedings of the Cambridge Philosophical Society, 30 (1934),
  pp.~150--163.

\bibitem{Dir-1934-SR}
\leavevmode\vrule height 2pt depth -1.6pt width 23pt, {\em Théorie du
  positron}, Solvay report, Gauthier-Villars XXV, 353S,  (1934), pp.~203--212.

\bibitem{Dit-1979-PRD}
{\sc W.~Dittrich}, {\em Effective {L}agrangians at finite temperature},
  Physical Review D, 19 (1979), pp.~2385--2390.

\bibitem{DitGie-2000-book}
{\sc W.~Dittrich and H.~Gies}, {\em Probing the quantum vacuum. Pertubative
  effective action approach in quantum electrodynamics and its application},
  Springer Berlin, Heidelberg, 2000.

\bibitem{DolJac-1974-PRD}
{\sc L.~Dolan and R.~Jackiw}, {\em Symmetry behavior at finite temperature},
  Physical Review D, 9 (1974), pp.~3320--3341.

\bibitem{ElmPerSka-1993-PRL}
{\sc P.~Elmfors, D.~Persson, and B.-S. Skagerstam}, {\em {QED} effective action
  at finite temperature and density}, Physical Review Letters, 71 (1993),
  pp.~480--483.

\bibitem{ElmSka-1995-PLB}
{\sc P.~Elmfors and B.-S. Skagerstam}, {\em Electromagnetic fields in a thermal
  background}, Physics Letters B, 348 (1995), pp.~141--148.

\bibitem{EstLewSer-2008-BAMS}
{\sc M.~J. Esteban, M.~Lewin, and E.~S\'{e}r\'{e}}, {\em Variational methods in
  relativistic quantum mechanics}, Bulletin of the American Mathematical
  Society, 45 (2008), pp.~535--593.

\bibitem{EulKoc-1935-DN}
{\sc H.~Euler and B.~Kockel}, {\em Über die {S}treuung von {L}icht an {L}icht
  nach der {D}iracschen {T}heorie}, Die Naturwissenschaften, 23 (1935),
  pp.~246--247.

\bibitem{FanKamInaYam-2017-EPJD}
{\sc X.~Fan, S.~Kamioka, T.~Inada, T.~Yamazaki, T.~Namba, S.~Asai, J.~Omachi,
  K.~Yoshioka, M.~Kuwata-Gonokami, A.~Matsuo, K.~Kawaguchi, K.~Kindo, and
  H.~Nojiri}, {\em The {OVAL} experiment: a new experiment to measure vacuum
  magnetic birefringence using high repetition pulsed magnets}, The European
  Physical Journal D, 71 (2017).

\bibitem{FetWal-1971-book}
{\sc A.~L. Fetter and J.~D. Walecka}, {\em Quantum theory of many-particle
  systems}, International series in pure and applied physics, McGraw-Hill, New
  York, NY, 1971.

\bibitem{Fur-1937-PR}
{\sc W.~H. Furry}, {\em A symmetry theorem in the positron theory}, Physical
  Review, 51 (1937), pp.~125--129.

\bibitem{GonKouSer-2023-AHP}
{\sc D.~Gontier, A.~E.~K. Kouande, and E.~Séré}, {\em Phase transition in the
  {P}eierls model for polyacetylene}, Annales Henri Poincaré, 24 (2023),
  pp.~3945--3966.

\bibitem{GraRyz-1965-book}
{\sc I.~S. Gradshteyn and I.~M. Ryzhik}, {\em Table of integrals, series, and
  products}, Academic Press, New York, 1965.

\bibitem{GraHaiLewSer-2013-ARMA}
{\sc P.~Gravejat, C.~Hainzl, M.~Lewin, and E.~S\'{e}r\'{e}}, {\em Construction
  of the {P}auli-{V}illars-regulated {D}irac vacuum in electromagnetic fields},
  Archive for Rational Mechanics and Analysis, 208 (2013), pp.~603--665.

\bibitem{GraHaiLewSer-2012-JEDP}
{\sc P.~Gravejat, C.~Hainzl, M.~Lewin, and E.~Séré}, {\em Two
  {H}artree-{F}ock models for the vacuum polarization}, Journées équations
  aux dérivées partielles,  (2012), pp.~1--31.

\bibitem{GraLewSer-2009-CMP}
{\sc P.~Gravejat, M.~Lewin, and E.~S\'{e}r\'{e}}, {\em Ground state and charge
  renormalization in a nonlinear model of relativistic atoms}, Communications
  in Mathematical Physics, 286 (2009), pp.~179--215.

\bibitem{GraLewSer-2011-CMP}
\leavevmode\vrule height 2pt depth -1.6pt width 23pt, {\em Renormalization and
  asymptotic expansion of {D}irac's polarized vacuum}, Communications in
  Mathematical Physics, 306 (2011), pp.~1--33.

\bibitem{GraLewSer-2018-JMPA}
\leavevmode\vrule height 2pt depth -1.6pt width 23pt, {\em Derivation of the
  magnetic {E}uler{\textendash}{H}eisenberg energy}, Journal de
  Math{\'{e}}matiques Pures et Appliqu{\'{e}}es, 117 (2018), pp.~59--93.

\bibitem{GreRei-2009-book}
{\sc W.~Greiner and J.~Reinhardt}, {\em Quantum electrodynamics}, Springer
  Berlin Heidelberg, 2009.

\bibitem{HaiLewSei-2008-RMP}
{\sc C.~Hainzl, M.~Lewin, and R.~Seiringer}, {\em A nonlinear model for
  relativistic electrons at positive temperature}, Reviews in Mathematical
  Physics, 20 (2008), pp.~1283--1307.

\bibitem{HaiLewSer-2005-CMP}
{\sc C.~Hainzl, M.~Lewin, and E.~S\'{e}r\'{e}}, {\em Existence of a stable
  polarized vacuum in the {B}ogoliubov-{D}irac-{F}ock approximation},
  Communications in Mathematical Physics, 257 (2005), pp.~515--562.

\bibitem{HaiLewSer-2005-JPA}
\leavevmode\vrule height 2pt depth -1.6pt width 23pt, {\em Self-consistent
  solution for the polarized vacuum in a no-photon {QED} model}, Journal of
  Physics A, 38 (2005), pp.~4483--4499.

\bibitem{HaiLewSerSol-2007-PRA}
{\sc C.~Hainzl, M.~Lewin, E.~S{\'{e}}r{\'{e}}, and J.~P. Solovej}, {\em
  Minimization method for relativistic electrons in a mean-field approximation
  of quantum electrodynamics}, Physical Review A, 76 (2007), pp.~52--104.

\bibitem{HaiLewSol-2007-CPAM}
{\sc C.~Hainzl, M.~Lewin, and J.~P. Solovej}, {\em The mean-field approximation
  in quantum electrodynamics: The no-photon case}, Communications on Pure and
  Applied Mathematics, 60 (2007), pp.~546--596.

\bibitem{HaiLewSpa-2012-JMP}
{\sc C.~Hainzl, M.~Lewin, and C.~Sparber}, {\em Ground state properties of
  graphene in {H}artree-{F}ock theory}, Journal of Mathematical Physics, 53
  (2012).

\bibitem{KirLin-1972-PLB}
{\sc D.~Kirzhnits and A.~Linde}, {\em Macroscopic consequences of the
  {W}einberg model}, Physics Letters B, 42 (1972), pp.~471--474.

\bibitem{Lei-1975-RMP}
{\sc G.~Leibbrandt}, {\em Introduction to the technique of dimensional
  regularization}, Reviews of Modern Physics, 47 (1975), pp.~849--876.

\bibitem{MarBroSte-2003-PRL}
{\sc M.~Marklund, G.~Brodin, and L.~Stenflo}, {\em Electromagnetic wave
  collapse in a radiation background}, Physical Review Letters, 91 (2003),
  p.~163601.

\bibitem{MigTestGonTav-2016-MNRAS}
{\sc R.~P. Mignani, V.~Testa, D.~Gonzalez~Caniulef, R.~Taverna, R.~Turolla,
  S.~Zane, and K.~Wu}, {\em Evidence for vacuum birefringence from the first
  optical-polarimetry measurement of the isolated neutron star {RX}
  {J}1856.5-3754}, Monthly Notices of the Royal Astronomical Society, 465
  (2016), pp.~492--500.

\bibitem{Mor-2025-AHP}
{\sc U.~Morellini}, {\em The {P}auli–{V}illars regularised free energy of
  {D}irac’s vacuum in purely magnetic fields}, Annales Henri Poincaré,
  (2025).

\bibitem{Mor-2025-JDE}
\leavevmode\vrule height 2pt depth -1.6pt width 23pt, {\em Relativistic
  electrons coupled with newtonian nuclear dynamics}, Journal of Differential
  Equations, 419 (2025), pp.~460--480.

\bibitem{MouTajBul-2006-RMP}
{\sc G.~A. Mourou, T.~Tajima, and S.~V. Bulanov}, {\em Optics in the
  relativistic regime}, Reviews of Modern Physics, 78 (2006), pp.~309--371.

\bibitem{NikOuv-1983-book}
{\sc A.~Nikiforov and V.~Ouvarov}, {\em Fonctions spéciales de la physique
  mathématique}, Editions MIR Moscou, 1983.

\bibitem{PauVil-1949-RMP}
{\sc W.~Pauli and F.~Villars}, {\em On the invariant regularization in
  relativistic quantum theory}, Reviews of Modern Physics, 21 (1949),
  pp.~434--444.

\bibitem{Sch-2013-book}
{\sc M.~D. Schwartz}, {\em Quantum field theory and the standard model},
  Cambridge University Press, 2013.

\bibitem{Sch-1951-PR}
{\sc J.~Schwinger}, {\em On gauge invariance and vacuum polarization}, Physical
  Review, 82 (1951), pp.~664--679.

\bibitem{Ser-1935-PR}
{\sc R.~Serber}, {\em Linear modifications in the {M}axwell field equations},
  Physical Review, 48 (1935), pp.~49--54.

\bibitem{Tha-1992-book}
{\sc B.~Thaller}, {\em The {D}irac equation}, Texts and Monographs in Physics,
  Springer-Verlag, Berlin, 1992.

\bibitem{Ueh-1935-PR}
{\sc E.~A. Uehling}, {\em Polarization effects in the positron theory},
  Physical Review, 48 (1935), pp.~55--63.

\bibitem{Wei-1974-PRD}
{\sc S.~Weinberg}, {\em Gauge and global symmetries at high temperature},
  Physical Review D, 9 (1974), pp.~3357--3378.

\bibitem{Wei-1936-MFM}
{\sc V.~Weisskopf}, {\em Über die {E}lektrodynamik des {V}akuums auf {G}rund
  der {Q}uantentheorie des {E}lektrons}, Mathematisk-fysiske Meddelelser, 16
  (1936), pp.~1--39.

\end{thebibliography}
\end{document}